\documentclass{article}

\parindent=20pt
\textwidth 156 mm
\textheight 220 mm
\topmargin 10pt
\evensidemargin 10pt
\oddsidemargin 10pt
\frenchspacing

\usepackage{amsmath}
\usepackage{amsfonts}
\usepackage{amssymb}
\usepackage{amscd}
\usepackage{amsthm}
\usepackage{latexsym}
\usepackage{times}
\usepackage{mathrsfs}
\usepackage{epsfig}
\usepackage{color}

\usepackage{xy}


\def\B{\mathcal B}
\def\C{\mathbb C}
\def\d{\mathrm{d}}
\def\D{\mathcal D}
\def\H{{{\mathcal H}_{\!\Lambda}}}
\def\HH{\mathcal H}
\def\HR{{{\mathcal H}_{\!\R}}}
\def\Hrond{\mathfrak h}

\def\R{\mathbb R}

\def\TT{\mathscr T}

\def\K{\mathcal K}

\def\({\left(}
\def\){\right)}
\def\[{\left[}
\def\]{\right]}

\def\e{\mathop{\mathrm{e}}\nolimits}

\def\12{{\textstyle\frac12}}

\def\Pv{\mathrm{P.v.}}
\def\bv{\boldsymbol{\varphi}}


\newtheorem{Theorem}{Theorem}

\newtheorem{Assumption}[Theorem]{Assumption}
\newtheorem{Lemma}[Theorem]{Lemma}
\newtheorem{Corollary}[Theorem]{Corollary}
\newtheorem{Proposition}[Theorem]{Proposition}

\begin{document}

\title{On the wave operators
for the Friedrichs-Faddeev model}

\author{H. Isozaki and S. Richard\footnote{On leave from Universit\'e de Lyon; Universit\'e
Lyon 1; CNRS, UMR5208, Institut Camille Jordan, 43 blvd du 11 novembre 1918, F-69622
Villeurbanne-Cedex, France.
Supported by the Japan Society for the Promotion of Science (JSPS) and by
``Grants-in-Aid for scientific Research''.}}

\date{\small}
\maketitle \vspace{-1cm}

\begin{quote}
\emph{
\begin{itemize}
\item[] Graduate School of Pure and Applied Sciences,
University of Tsukuba, \\
1-1-1 Tennodai,
Tsukuba, Ibaraki 305-8571, Japan
\item[] \emph{E-mails:} isozakih@math.tsukuba.ac.jp, richard@math.univ-lyon1.fr
\end{itemize}
}
\end{quote}

\maketitle

\begin{abstract}
We provide new formulae for the wave operators in the context of the Friedrichs-Faddeev model. Continuity with respect to the energy of the scattering matrix and a few results on eigenfunctions corresponding to embedded eigenvalues are also derived.
\end{abstract}

\section{Introduction}\label{introduction}

In a series of recent works on scattering theory and Levinson's theorem \cite{KR1,KR3,KR5,KR6,RT} we advocate new formulae for the wave operators in the context of quantum scattering theory. Namely, let $H_0$ and $H$ be two self-adjoint operators in a Hilbert space $\HH$, and assume that $H_0$ has a purely absolutely continuous spectrum. In the time dependent framework of scattering theory, the wave operators $W_\pm$ are defined by the strong limits
\begin{equation*}
W_\pm := s-\lim_{t \to \pm \infty} \e^{itH} \e^{-itH_0}
\end{equation*}
whenever these limits exist. Then, our recent finding is that under suitable assumptions on $H_0$ and $H$ the following formula holds:
\begin{equation}\label{newf}
W_- =  1 + \bv(D)(S-1)+K
\end{equation}
where $S:=W_+^*W_-$ is the scattering operator, $D$ is an auxiliary  self-adjoint operator in $\HH$, $\bv$ is an explicit function and $K$ is a compact operator (we refer to Theorem \ref{thmmain} in Section \ref{secmain} for the precise statement). In other words the wave operator $W_-$ has, modulo compact operators, a very explicit and convenient form. Note that a similar formula for $W_+$ also exists.

For information, let us mention that \eqref{newf} was first proved with $K=0$ for Schr\"odinger operators with one $\delta$-interaction in space dimension $1$ to $3$ \cite{KR1}. This result was then fully extended to more regular potentials in the $1$-dimensional case \cite{KR5} and partially extended for the $3$-dimensional situation \cite{KR6}. In the article \cite{RT} the same formula was obtained for a rank-one perturbation, and in \cite{PR} the Aharonov-Bohm model was considered. Now, let us stress that the main difficulty for deriving \eqref{newf} relies on the proof of the compactness of the term $K$, and that this difficulty strongly depends on space dimensions. Indeed, even if in the context of potential scattering the $1$-dimensional problem is under control, the $3$-dimensional is much less tractable, and the even dimensional case has not been solved yet.

Our purpose in the present paper is to establish formula \eqref{newf} in the context of the Friedrichs-Faddeev model as presented in \cite[Sec.~4.1\&4.2]{Y}. In fact its interest is twofold: Firstly, embedded eigenvalues can exist in this model and they represent a special interest in our investigations. Secondly, the mentioned problem of space dimension is overtaken in this setting and does not play any role. Then, let us mention that an important corollary of formula \eqref{newf} is a straightforward proof of a topological version of Levinson's theorem once a suitable $C^*$-algebraic framework is introduced. However, since such a construction would not differ for this model from the ones already presented in \cite{KR5,KR6} and \cite{RT} we have decided not to go on here in that direction and to concentrate mainly on the derivation of \eqref{newf}.

Let us end this Introduction with a few references about this model. Already in 1938 Friedrichs proposed considering the pair of operators $(H_0,H_0+V)$ in $L^2([-1,1])$, where $H_0$ is the multiplication operator by the identity map and $V$ is an integral operator satisfying suitable conditions \cite{Fried}. The first important results on this problem were then proved by Faddeev in \cite{Fad}. Later on, the possible existence of singularly continuous spectrum for $H$ and the presence of embedded eigenvalues have attracted lots of attention, see for example \cite{DNY,L1,L2,PP}. Now, in Sections 4.1 and 4.2 of \cite{Y} a concise but rather complete presentation of the model is provided. Since our analysis is based on the results contained in this reference, we recall them in Section \ref{introFrie}. Our main contribution is then presented in Section \ref{secmain} while the two last sections are devoted to the proof of the compactness of the operator $K$ under two different sets of assumptions, see Propositions \ref{easy} and \ref{vp}.
Let us finally mention that the continuity with respect to the energy of the scattering matrix is a by-product of our analysis, and that a few results on eigenfunctions corresponding to embedded eigenvalues are also derived.

\section{Framework}\label{introFrie}

In this section, we introduce the Friedrichs-Faddeev model as presented in Sections 4.1 and 4.2 of \cite{Y} and recall a few results. Let $\Lambda:=[a,b]\subset \R$ be a finite interval and let $\Hrond$ be a Hilbert space with norm  $\|\cdot\|_\Hrond$ and scalar product $\langle \cdot,\cdot \rangle_\Hrond$. We denote by $\H$ the Hilbert space $L^2(\Lambda;\Hrond)$, and consider in $\H$ the bounded self-adjoint operator $H_0$ acting on $f \in C(\Lambda;\Hrond)\subset \H$ as $[H_0 f](\lambda):= \lambda f(\lambda)$ for any $\lambda \in \Lambda$.

Now, let $v : \Lambda \times \Lambda\to \K(\Hrond)$ be a H\"older continuous function of exponent $\alpha_0\in (1/2,1]$ which takes values in the algebra $\K(\Hrond)$ of compact operators on $\Hrond$. More precisely, we assume that $v(\lambda,\mu)\in \K(\Hrond)$ for all $\lambda, \mu \in \Lambda$ and that
\begin{equation*}
\sup_{\lambda, \mu \in \Lambda}
\|v(\lambda,\mu)\|_{\B(\Hrond)} + \sup_{\lambda,\mu,\lambda',\mu'\in \Lambda}
\big\|v(\lambda',\mu')-v(\lambda,\mu)\big\|_{\B(\Hrond)}\Big/
\big(|\lambda -\lambda'| + |\mu'-\mu|\big)^{\alpha_0} <\infty \ .
\end{equation*}
We also require that $v(\lambda,\mu)= v(\mu,\lambda)^*$ and that  the function $v$ vanishes at the boundary of its domain, {\it i.e.}~for all $\lambda$ and $\mu$
$$
v(\lambda,a)=v(\lambda,b)= v(a,\mu)=v(b,\mu)=0\ .
$$

It then follows from these assumptions that the operator $H:=H_0+V$, with $V$ defined on $f\in C(\Lambda;\Hrond)\subset \H$ and for $\lambda \in \Lambda$ by
\begin{equation*}
[Vf](\lambda) := \int_\Lambda v(\lambda,\mu)\;\!f(\mu)\;\!\d \mu\ ,
\end{equation*}
is a bounded and self-adjoint operator in $\H$.
In fact, $V$ is a compact perturbation of $H_0$. It is then a standard result that the essential spectra of $H_0$ and $H$ coincide with $\Lambda$. Furthermore, it is proved in \cite[Sec.~4.1 \& 4.2]{Y} that $H$ has no singularly continuous spectrum and that the point spectrum $\sigma_p(H)$ of $H$ is exhausted by a finite set of eigenvalues of finite multiplicities.

Now, for $z \in \C\setminus \R$ let us set $R_0(z):=(H_0-z)^{-1}$ and $R(z):=(H-z)^{-1}$ for the resolvents of $H_0$ and $H$, respectively. For suitable $z\in \C$ we also set
\begin{equation}\label{defTz}
T(z):=V-V\;\!R(z)\;\!V\ .
\end{equation}
Clearly, $T(\cdot)$ is an operator-valued meromorphic function in $\C\setminus \Lambda$ and has poles only at points of the discrete spectrum on $H$.
Additional properties of this operator are recalled in the next proposition. We refer to \cite[Thm.~4.1.1]{Y} for its proof and for more detailed properties of $T(z)$. In the sequel $\Pi$ denotes the closed complex plane with a cut along the spectrum $\Lambda$ of the operator $H_0$. Note also that "integral operator" means here operator-valued integral operator.

\begin{Proposition}\label{Yaf1}
For $z \in \Pi\setminus \sigma_p(H)$, the operator $T(z)$ is an integral operator which kernel $t(\cdot,\cdot,z): \Lambda\times \Lambda \to \K(\Hrond)$ satisfies
\begin{equation*}
\|t(\lambda',\mu',z')-t(\lambda,\mu,z)\|_{\B(\Hrond)}\leq {\rm c} \;\!
\big(|\lambda'-\lambda|+|\mu'-\mu|+|z'-z|\big)^{\alpha}
\end{equation*}
for any $\alpha <\alpha_0$, any $\lambda,\mu,\lambda',\mu' \in \Lambda$ and
any $z,z' \in \Pi\setminus \sigma_p(H)$.
The constant ${\rm c}$ is independent of the variables $z,z'$ outside arbitrary small neighbourhoods of $\sigma_p(H)$. Furthermore, on the boundary of $\Lambda \times \Lambda$ the kernel $t(\cdot,\cdot,z)$ vanishes.
\end{Proposition}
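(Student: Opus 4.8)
The plan is to derive all the stated properties of $T(z)$ from the defining formula \eqref{defTz} together with the Hölder regularity of $v$ and the known boundary values of the resolvent $R(z)$. First I would recall from \cite[Thm.~4.1.1]{Y} that, for $z$ away from $\sigma_p(H)$, the boundary values $R(\lambda\pm i0)$ exist in an appropriate weighted sense and that the sandwiched resolvent $V R(z) V$ inherits an integral-kernel representation; writing $v(\lambda,\mu)$ for the kernel of $V$, the kernel of $T(z)$ is then formally
\begin{equation*}
t(\lambda,\mu,z) = v(\lambda,\mu) - \int_\Lambda\!\!\int_\Lambda v(\lambda,\nu)\;\![R(z)](\nu,\nu')\;\!v(\nu',\mu)\;\!\d\nu\,\d\nu' .
\end{equation*}
The point is that each factor of $v$ forces the integrand to vanish when $\nu$ or $\nu'$ hits $\{a,b\}$, so the singularity of $R(z)$ near the endpoints of $\Lambda$ (and more importantly near the real axis) is tamed by the compact, Hölder, boundary-vanishing prefactors.

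Second, I would establish the Hölder estimate. Splitting $t(\lambda',\mu',z') - t(\lambda,\mu,z)$ into the difference of the $v$-terms (immediate from the Hölder hypothesis on $v$, giving exponent $\alpha_0$) plus the difference of the $VR(z)V$-terms, the latter is handled by telescoping: replace $\lambda\to\lambda'$, then $\mu\to\mu'$, then $z\to z'$ one at a time. The variations in $\lambda,\mu$ are absorbed by the Hölder continuity of $v$ in its first, respectively second, argument (uniformly in the other variables, and using $\sup\|v\|<\infty$ to control the remaining factor). The variation in $z$ is the crucial one: here one uses the Hölder continuity of the map $z\mapsto R(z)$ up to the cut, in the topology of bounded operators between the relevant weighted spaces, as furnished by \cite[Thm.~4.1.1]{Y}; the weights are exactly compensated by the vanishing of $v$ at $a,b$. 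Choosing $\alpha<\alpha_0$ is forced because the limiting absorption principle for $R(z)$ typically yields Hölder continuity only for exponents strictly below the optimal one, and the minimum of the various exponents appearing is what survives. The uniformity of ${\rm c}$ away from neighbourhoods of $\sigma_p(H)$ follows from the corresponding uniform bounds on $R(z)$ in \cite{Y}, since $T(z)$ is meromorphic with poles confined to $\sigma_p(H)$.

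Third, the boundary-vanishing statement: if $\lambda\in\{a,b\}$ then $v(\lambda,\cdot)=0$, hence the $v$-term vanishes and the inner integral has a vanishing outer factor $v(\lambda,\nu)$, so $t(\lambda,\mu,z)=0$; symmetrically for $\mu\in\{a,b\}$. Compactness of the values $t(\lambda,\mu,z)\in\K(\Hrond)$ follows because $v(\lambda,\mu)\in\K(\Hrond)$ and $\K(\Hrond)$ is a two-sided ideal in $\B(\Hrond)$, so each product $v(\lambda,\nu)[R(z)](\nu,\nu')v(\nu',\mu)$ lies in $\K(\Hrond)$, and the integral (a norm-limit of Riemann sums of compact operators, using continuity of the integrand in $\nu,\nu'$) stays in $\K(\Hrond)$.

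The main obstacle I expect is the $z$-variation estimate up to the cut: one must carefully set up the weighted spaces in which $R(z)$ is Hölder continuous and verify that the boundary-vanishing and Hölder regularity of $v$ supply exactly the decay needed to move from those weighted estimates back to a genuine $\B(\Hrond)$-valued Hölder bound on the kernel, uniformly in $\lambda,\mu$. Once the bookkeeping of weights versus the vanishing order of $v$ at $a,b$ is done — and this is essentially the content of \cite[Thm.~4.1.1]{Y}, which we are entitled to invoke — the rest is a routine triangle-inequality argument.
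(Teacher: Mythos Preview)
The paper does not contain a proof of this proposition: the sentence immediately preceding it reads ``We refer to \cite[Thm.~4.1.1]{Y} for its proof and for more detailed properties of $T(z)$.'' Proposition~\ref{Yaf1} is stated as a quotation of Yafaev's result, not as something the authors establish. So there is no ``paper's own proof'' to compare your attempt against.

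That said, your outline has a structural issue worth flagging. You propose to write
\[
t(\lambda,\mu,z) = v(\lambda,\mu) - \int_\Lambda\!\!\int_\Lambda v(\lambda,\nu)\,[R(z)](\nu,\nu')\,v(\nu',\mu)\,\d\nu\,\d\nu'
\]
and then to import H\"older continuity of $z\mapsto R(z)$ up to the cut from \cite[Thm.~4.1.1]{Y}. This is circular: the full resolvent $R(z)$ is not a priori an integral operator, and its boundary behaviour on the cut is \emph{derived from} the properties of $T(z)$, not the other way round (via identities such as $R(z)=R_0(z)-R_0(z)T(z)R_0(z)$). The actual route in \cite{Y}, echoed in Section~5 of the present paper, goes through the free resolvent only: one sets $A(z):=-VR_0(z)$, shows that $A(z)$ is compact on the H\"older spaces $\dot C^{\alpha}_0(\Lambda;\Hrond)$ and norm-continuous in $z\in\Pi$ (this is \cite[Lem.~4.1.2]{Y}), and then solves the Lippmann--Schwinger type equation $T(z)=V+T(z)R_0(z)V$ by Fredholm alternative in those spaces. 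The H\"older regularity of $t(\cdot,\cdot,z)$ and its continuity in $z$ then come out of the invertibility of $1-A(z)$ away from $\sigma_p(H)$, with the constant blowing up only near eigenvalues. Your boundary-vanishing and compactness arguments are fine, but the core analytic step needs to be rerouted through $R_0(z)$ rather than $R(z)$.
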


Based on the analysis of the operator $T(z)$, a proof of the existence and of the asymptotic completeness of the wave operators is proposed in \cite[Sec.~4.2]{Y}. More precisely, under the mentioned hypotheses on $v$ the wave operators $W_{\pm}$
exist, are isometries and their ranges are equal to $\H_{\!,p}(H)^\bot$. Here $\H_{\!,p}(H)$ denotes the subspace of $\H$ spanned by the eigenfunctions of $H$.
Let us now set $\kappa(H):=\big(\sigma_p(H)\cap \Lambda\big) \cup \{a,b\}$, which corresponds to the set of embedded eigenvalues together with the thresholds $a$ and $b$. Then, on the dense subset $\D$ of $\H$ defined by $\D:=C^{\infty}_c\big(\Lambda \setminus \kappa(H);\Hrond\big)$ the following stationary representations hold:
\begin{equation}\label{defW}
[W_{\pm}f](\lambda)= f(\lambda) - \int_\Lambda t(\lambda,\mu,\mu\mp i 0)\;\!(\lambda - \mu \pm i0)^{-1}\;\!f(\mu)\;\!\d \mu \qquad \forall f \in \D, \lambda \in \Lambda\ .
\end{equation}
The precise meaning of the second term on the r.h.s.~is the following: one first considers the family of expressions
$$
\big[\TT_\pm(\varepsilon,\tau)f\big](\lambda):=
\int_\Lambda t(\lambda,\mu,\mu\mp i \varepsilon)\;\!(\lambda - \mu \pm i\tau)^{-1}\;\!f(\mu)\;\!\d \mu
$$
for $\varepsilon,\tau>0$.
Then, the second term $[\TT_\pm f](\lambda)$ in \eqref{defW} is obtained by taking the strong limit, {\it i.e.}
$$
\TT_\pm f: s-\lim_{\varepsilon \searrow 0, \tau \searrow 0} \TT_\pm(\varepsilon,\tau)f \qquad \forall f \in \D.
$$
We refer to \cite[Sec.~4.2.2]{Y} for a justification of these stationary formulas.

Similarly, the scattering operator $S:=W_+^*\;\!W_-$ can also be expressed in terms of the kernel of $T(z)$.
More precisely, the scattering operator is an operator-valued multiplication operator, {\it i.e.}~$[Sf](\lambda)= s(\lambda)f(\lambda)$ for almost every $\lambda \in \Lambda$, and the scattering matrix $s(\lambda)\in \B(\Hrond)$ is given for $\lambda \in \Lambda\setminus \sigma_p(H)$ by
\begin{equation}\label{slambda}
s(\lambda)= 1-2\pi i \;\!t(\lambda,\lambda,\lambda+i0)\ .
\end{equation}
We also mention that for $\lambda \in \Lambda\setminus \sigma_p(H)$ the operator $s(\lambda)$ is unitary, that $s(\lambda)-1\in \K(\Hrond)$ and that the map
\begin{equation*}
\Lambda \setminus \sigma_p(H) \ni \lambda \mapsto s(\lambda)\in \B(\Hrond)
\end{equation*}
is H\"older continuous in norm for any exponent $\alpha <\alpha_0$.

Let us finally derive a new expression for the wave operators, concentrating on $W_-$ since a similar formula for $W_+$ can then be deduced. So for any $f\in \D$ we consider the equalities:
\begin{eqnarray}\label{f1}
\nonumber [(W_--1)f](\lambda)&=& -\int_\Lambda t(\lambda,\mu,\mu + i 0 )\;\!(\lambda - \mu - i0)^{-1}\;\!f(\mu)\;\!\d \mu \\
\nonumber &=& -\int_\Lambda (\lambda - \mu - i0)^{-1}\;\! t(\mu,\mu,\mu + i 0 )\;\!f(\mu)\;\!\d \mu \\
\nonumber &&-\int_\Lambda (\lambda - \mu - i0)^{-1}\;\! \big[t(\lambda,\mu,\mu+i0)-t(\mu,\mu,\mu + i 0 )\big]\;\!f(\mu)\;\!\d \mu \\
&=& \frac{1}{2\pi i} \int_\Lambda (\lambda - \mu - i0)^{-1}\;\! \big[s(\mu)-1\big]\;\!f(\mu)\;\!\d \mu + [Kf](\lambda)
\end{eqnarray}
with $[Kf](\lambda):= \int_\Lambda k(\lambda,\mu) f(\mu)\;\!\d \mu\ $ and
\begin{equation}\label{defdeK}
k(\lambda,\mu):=- (\lambda - \mu - i0)^{-1}\;\! \big[t(\lambda,\mu,\mu+i0)-t(\mu,\mu,\mu + i 0 )\big].
\end{equation}

Our goal is now twofold: firstly one seeks for a simpler expression for the first term in \eqref{f1}, and secondly one looks for sufficient conditions which would imply the compactness of the operator $K$.

\section{In the rescaled energy's representation}\label{secmain}

In this section we derive a simpler expression for the first term in \eqref{f1} by working in another representation of the original Hilbert space. The following construction is inspired by \cite{BSB} from which we borrow the idea of rescaled energy's representation.

Let us consider the Hilbert space $\HR:=L^2(\R;\Hrond)$ and the unitary map $U:\H\to \HR$ defined on any $f \in C(\Lambda;\Hrond)\subset\H$ and for $x \in \R$ by
\begin{equation*}
[Uf](x):=\sqrt{\frac{b-a}{2}}\frac{1}{\cosh(x)}f\Big(
\frac{a+b\e^{2x}}{1+\e^{2x}}\Big)\ .
\end{equation*}
The inverse of this map is given for $\varphi \in C_c(\R;\Hrond)\subset \HR$ and $\lambda \in \Lambda$ by
\begin{equation*}
[U^{-1}\varphi](\lambda) = \sqrt{\frac{b-a}{2}}\frac{1}{\sqrt{(\lambda -a)(b-\lambda)}} \varphi\Big(\frac{1}{2}\ln\frac{\lambda-a}{b-\lambda}\Big)\ .
\end{equation*}
Then, let $M$ be an operator-valued multiplication operator in $\H$ by a function $m \in L^\infty\big(\Lambda;\B(\Hrond)\big)$.
A straightforward computation leads to the following expression for its representation in $\HR$: $\widetilde M:= UMU^{-1}$ is the operator-valued multiplication operator by the function $\widetilde m(\cdot)=m\big(\frac{a+b\e^{2\cdot}}{1+\e^{2\cdot}}\big)$.
In particular, by choosing $m(\lambda)= \lambda$ one obtains that $UH_0U^{-1}$ is the operator of multiplication by the bounded function
$\widetilde h_0$ defined by $\widetilde h_0(x) =\frac{a+b\e^{2x}}{1+\e^{2x}}$. Note that this function is strictly increasing on $\R$ and takes the asymptotic values $\widetilde h_0(-\infty)=a$ and $\widetilde h_0(\infty) = b$.

Let us now concentrate on the singular part of the first term in \eqref{f1}. More precisely, for any $f \in C^{\infty}_c\big(\Lambda;\Hrond\big)$ we concentrate on the expression
\begin{equation*}
[Tf](\lambda):=\frac{1}{2\pi i} \int_\Lambda (\lambda - \mu - i0)^{-1} \;\!f(\mu)\;\!\d \mu
\end{equation*}
which is equal to
\begin{equation*}
\frac{1}{2\pi i} \;\Pv\!\!\int_\Lambda (\lambda - \mu)^{-1} \;\!f(\mu)\;\!\d \mu
+ \frac{1}{2} f(\lambda).
\end{equation*}
A straightforward computation leads then to the following equality for any $\varphi \in C^{\infty}_c\big(\R;\Hrond\big)$ and $x \in \R$:
\begin{equation*}
[UTU^{-1}\varphi](x)=\frac{1}{2}\Big[
\frac{i}{\pi } \;\Pv\!\!\int_\R \frac{\varphi(y)}{\sinh(y-x)}\;\d y + \varphi(x)
\Big]\ .
\end{equation*}

Thus, if $X$ and $D$ denote respectively the usual self-adjoint operators in $\HR$ corresponding to the formal expressions $[X\varphi](x)=x\varphi(x)$ and $[D\varphi](x)=-i\varphi'(x)$, then one is led to the equality
\begin{equation*}
[UTU^{-1}\varphi](x)= \frac{1}{2}\Big[\frac{i}{\pi }  \;\Pv\!\! \int_\R \frac{1}{\sinh(y)}\;[\e^{iyD}\varphi]\;\d y + \varphi\Big](x) \ .
\end{equation*}
Furthermore, by taking into account the formula
\begin{equation*}
\frac{i}{\pi}\;\Pv\!\! \int_\R\frac{\e^{-ixy}}{\sinh(y)}\;\d y= \tanh\big(\frac{\pi}{2}x\big)
\end{equation*}
one finally obtains
$UTU^{-1} =
\frac{1}{2}\big\{1-\tanh\big(\frac{\pi}{2}D\big)\big\}$.
By collecting these results and by a density argument, one has thus proved :

\begin{Theorem}\label{thmmain}
The following equality holds:
$$
U(W_--1)U^{-1} =\frac{1}{2}\Big\{1-\tanh\big(\frac{\pi}{2}D\big)\Big\} \big(\widetilde S-1\big) + \widetilde K
$$
with $\widetilde S = USU^{-1}$ and $\widetilde K= UKU^{-1}$.
The operator $\widetilde S$ is equal to the operator-valued multiplication operator defined by the function $\R\ni x\mapsto s\big(\frac{a+b\e^{2x}}{1+\e^{2x}}\big)\in \B(\Hrond)$ for almost every $x \in \R$.
\end{Theorem}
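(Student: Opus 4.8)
The plan is to assemble three facts already in place: the decomposition \eqref{f1} of $W_--1$ into a singular part plus the operator $K$; the computation, carried out just above, that $UTU^{-1}=\tfrac12\{1-\tanh(\tfrac{\pi}{2}D)\}$; and the rule $\widetilde M=UMU^{-1}$ for conjugating an operator-valued multiplication operator. First I would recast \eqref{f1} in operator form: for $f\in\D$ set $g:=(S-1)f$, the compactly supported function $\mu\mapsto[s(\mu)-1]f(\mu)$ on $\Lambda$; then, with $T$ the operator of Section~\ref{secmain}, the first term on the right-hand side of \eqref{f1} is exactly $[Tg](\lambda)$, so that $(W_--1)f=T(S-1)f+Kf$. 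Since $W_-$ is an isometry and $T$ extends to a bounded operator on $\H$ — its principal-value part being the (operator-valued) finite Hilbert transform on $L^2(\Lambda;\Hrond)$, the remaining part being $\tfrac12$ times the identity — the operator $K=(W_--1)-T(S-1)$ is bounded. Applying the unitary $U$ on both sides gives
\begin{equation*}
U(W_--1)U^{-1}=(UTU^{-1})\,(\widetilde S-1)+\widetilde K
\end{equation*}
on the dense subspace $U\D\subset\HR$, hence on all of $\HR$ since every operator involved is bounded.

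Next I would justify $UTU^{-1}=\tfrac12\{1-\tanh(\tfrac{\pi}{2}D)\}$ as an identity of bounded operators, following the excerpt. On $C^\infty_c(\R;\Hrond)$ one splits $(\lambda-\mu-i0)^{-1}$ into its principal-value part and the Sokhotski $\delta$-contribution, the latter producing the summand $\tfrac12$; the change of variables implemented by $U$ turns the Cauchy kernel $(\lambda-\mu)^{-1}$ into a multiple of $1/\sinh(y-x)$, so that the principal-value part becomes convolution by $y\mapsto\tfrac{i}{\pi}\Pv\tfrac{1}{\sinh y}$. Being translation invariant, this is a function of $D$, and the Fourier-transform identity $\tfrac{i}{\pi}\Pv\int_\R\tfrac{\e^{-ixy}}{\sinh y}\,\d y=\tanh(\tfrac{\pi}{2}x)$ computes that function; together with the $\tfrac12$ from the $\delta$-term this yields the claimed formula. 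Since $C^\infty_c(\R;\Hrond)$ is dense and both sides are bounded, the identity extends to $\HR$.

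For the last ingredient, $S$ is the operator-valued multiplication operator by $s\in L^\infty(\Lambda;\B(\Hrond))$, so the rule of Section~\ref{secmain} with $m=s$ gives that $\widetilde S=USU^{-1}$, and hence $\widetilde S-1=U(S-1)U^{-1}$, is multiplication by $x\mapsto s\big(\tfrac{a+b\e^{2x}}{1+\e^{2x}}\big)-1$. Inserting the first two ingredients into the displayed identity produces the stated equality, and the description of $\widetilde S$ in the theorem is precisely what has just been obtained.

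I expect the only genuinely non-routine point to be the middle step: turning the formal boundary-value kernels $(\lambda-\mu\mp i0)^{-1}$ into bounded operators and transporting them faithfully through $U$ — concretely, the $L^2$-boundedness of the finite Hilbert transform together with the Fourier identity for $1/\sinh$ (and a careful check that $U$ sends $\D$ into $C^\infty_c(\R;\Hrond)$ so the pointwise manipulations apply). Everything else — rewriting \eqref{f1}, conjugating by $U$, the multiplication-operator rule, and the density argument — is bookkeeping. I would stress that this theorem asserts nothing about $\widetilde K=UKU^{-1}$ beyond boundedness; its compactness, which is the real content of formula \eqref{newf}, is established separately under the hypotheses of Propositions~\ref{easy} and~\ref{vp}.
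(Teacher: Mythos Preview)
Your proposal is correct and follows essentially the same route as the paper: the decomposition \eqref{f1}, the computation of $UTU^{-1}$ via the Sokhotski split and the Fourier identity for $1/\sinh$, the conjugation rule for multiplication operators, and a density argument are exactly what the paper assembles in Section~\ref{secmain} before stating the theorem. You are in fact slightly more explicit than the paper about the boundedness of $T$ (via the finite Hilbert transform) and about why the identity on $\D$ extends to all of $\H$, which is appropriate since the paper leaves ``by a density argument'' unelaborated.
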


By taking the the asymptotic completeness into account, one easily deduces from the relation $W_+=W_-S^*$ the following corollary.

\begin{Corollary}
The following equality holds:
$$
U(W_+-1)U^{-1} =\frac{1}{2}\Big\{1+\tanh\big(\frac{\pi}{2}D\big)\Big\} \big(\widetilde S^*-1\big) + \widetilde K \widetilde S^*
$$
with $\widetilde S^* = US^*U^{-1}$.
\end{Corollary}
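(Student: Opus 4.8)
The plan is to obtain the corollary from Theorem~\ref{thmmain} by a short algebraic manipulation, using only the relation $W_+=W_-S^*$ recalled above together with the unitarity of the scattering operator. First I would spell out why $W_+=W_-S^*$ holds. Since $W_\pm$ are isometries whose ranges both coincide with $\H_{\!,p}(H)^\bot$, the operator $W_+W_+^*$ is the orthogonal projection onto this common range, whence $W_+W_+^*W_-=W_-$, that is $W_+S=W_-$. The same range equality gives $S^*S=W_-^*(W_+W_+^*)W_-=W_-^*W_-=1$ and likewise $SS^*=1$, so $S$ is unitary on $\H$ and $W_+=W_-S^*$ follows. This is precisely the use of asymptotic completeness alluded to in the statement.

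Next I would write $W_+-1=(W_--1)S^*+(S^*-1)$, conjugate by the unitary $U$, and use $US^*U^{-1}=(USU^{-1})^*=\widetilde S^*$ together with the expression for $U(W_--1)U^{-1}$ furnished by Theorem~\ref{thmmain}. This yields
$$
U(W_+-1)U^{-1}=\frac{1}{2}\Big\{1-\tanh\big(\tfrac{\pi}{2}D\big)\Big\}\big(\widetilde S-1\big)\widetilde S^{*}+\widetilde K\widetilde S^{*}+\big(\widetilde S^{*}-1\big)\ .
$$
The only computation left is to invoke the unitarity of $\widetilde S=USU^{-1}$ to simplify $(\widetilde S-1)\widetilde S^{*}=1-\widetilde S^{*}=-(\widetilde S^{*}-1)$. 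Substituting this and collecting the terms proportional to $\widetilde S^{*}-1$ produces the coefficient $1-\frac{1}{2}\{1-\tanh(\tfrac{\pi}{2}D)\}=\frac{1}{2}\{1+\tanh(\tfrac{\pi}{2}D)\}$, which is exactly the announced formula; the identity $\widetilde S^{*}=US^{*}U^{-1}$ is then merely the definition of the notation.

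I do not expect any genuine obstacle here: the whole argument is two lines once $W_+=W_-S^*$ and the unitarity of $S$ are in hand, and the description of $\widetilde S^{*}$ as an operator-valued multiplication operator is inherited directly from Theorem~\ref{thmmain} by conjugation with $U$. The one point deserving an explicit sentence is the justification that $W_+W_+^*W_-=W_-$, which rests on the equality of the ranges of $W_+$ and $W_-$ recalled from \cite[Sec.~4.2]{Y}.
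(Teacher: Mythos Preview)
Your argument is correct and is exactly the route the paper intends: the paper only records that the corollary follows from $W_+=W_-S^*$ together with asymptotic completeness, and you have simply filled in the two-line algebra behind that remark. The computation $(\widetilde S-1)\widetilde S^{*}=-(\widetilde S^{*}-1)$ and the collection of coefficients is accurate, and your justification of $W_+=W_-S^*$ via the equality of ranges is the standard one.
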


\section{Compactness, the easy case}

In this section we show that, with an implicit condition, the compactness of the operator $K$ defined in \eqref{f1} is easily checked. In the following section, this implicit assumption will be removed.

So let us assume that the point spectrum of $H$ inside $\Lambda$ is empty, namely $\sigma_p(H)\cap [a,b]=\emptyset$. In such a situation, Proposition \ref{Yaf1} can be strengthened in the sense that the H\"older continuity holds for all $z,z' \in \Pi$ and that the constant ${\rm c}$ can be chosen independently of $z$ and $z'$. It then follows that the kernel $k$ introduced in \eqref{defdeK} corresponds to a compact-valued Hilbert-Schmidt operator. Indeed, one has
\begin{align*}
& \int_\Lambda \int_\Lambda \big\|k(\lambda,\mu) \big\|_{\B(\Hrond)}^2\;\!\d \lambda\;\!\d \mu \\
&=  \int_\Lambda \int_\Lambda |\lambda - \mu|^{-2}\big\| t(\lambda,\mu,\mu+i0)-t(\mu,\mu,\mu + i0) \big\|_{\B(\Hrond)}^2 \;\!\d \lambda\;\!\d \mu \\
&\leq {\rm c}^2 \int_\Lambda \int_\Lambda |\lambda -\mu|^{2(\alpha -1)} \;\!\d \lambda\;\!\d \mu \\
&<\infty
\end{align*}
since one can choose the exponent $\alpha \in (1/2,\alpha_0)$. In other words:

\begin{Proposition}\label{easy}
In the framework introduced in Section \ref{introFrie} and with the additional assumption $\sigma_p(H)\cap [a,b]=\emptyset$, the operator $K$ defined in \eqref{f1} belongs to $\K(\H)$.
\end{Proposition}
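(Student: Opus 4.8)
The heart of the matter is the Hilbert--Schmidt-type estimate already displayed before the statement; the plan is to organise it carefully and to add the one extra step needed to pass from boundedness to compactness. First I would record the strengthened form of Proposition \ref{Yaf1} that is available under the hypothesis $\sigma_p(H)\cap[a,b]=\emptyset$: the kernel $t(\lambda,\mu,z)$ then extends continuously to all of $\Pi$, so that in particular the boundary values $t(\lambda,\mu,\mu+i0)$ are well defined, and it satisfies the joint Hölder estimate $\|t(\lambda',\mu',z')-t(\lambda,\mu,z)\|_{\B(\Hrond)}\le {\rm c}\,(|\lambda'-\lambda|+|\mu'-\mu|+|z'-z|)^{\alpha}$ for every $\alpha<\alpha_0$, with ${\rm c}$ independent of $z,z'\in\Pi$. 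This is the only place where the assumption on $\sigma_p(H)$ enters.

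Next I would observe that in the expression \eqref{defdeK} the factor $(\lambda-\mu-i0)^{-1}$ may be replaced by the ordinary function $(\lambda-\mu)^{-1}$: writing $(\lambda-\mu-i0)^{-1}=\Pv\,(\lambda-\mu)^{-1}+i\pi\,\delta(\lambda-\mu)$, the contribution of the Dirac mass vanishes since the bracket $t(\lambda,\mu,\mu+i0)-t(\mu,\mu,\mu+i0)$ is zero on the diagonal $\lambda=\mu$. Hence $k(\cdot,\cdot)$ is a genuine measurable function on $\Lambda\times\Lambda$ with values in $\K(\Hrond)$, and applying the Hölder estimate of the previous step to the two points $(\lambda,\mu,\mu+i0)$ and $(\mu,\mu,\mu+i0)$ yields the pointwise bound $\|k(\lambda,\mu)\|_{\B(\Hrond)}\le {\rm c}\,|\lambda-\mu|^{\alpha-1}$ for every $\alpha<\alpha_0$.

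Since $\alpha_0>1/2$ one may choose $\alpha\in(1/2,\alpha_0)$, so that $2(\alpha-1)\in(-1,0)$ and the function $(\lambda,\mu)\mapsto|\lambda-\mu|^{2(\alpha-1)}$ is integrable over the bounded square $\Lambda\times\Lambda$; combined with the previous bound this gives $\int_\Lambda\int_\Lambda\|k(\lambda,\mu)\|_{\B(\Hrond)}^2\,\d\lambda\,\d\mu<\infty$. Using the elementary Cauchy--Schwarz estimate $\|Kf\|_\H\le\big(\int_\Lambda\int_\Lambda\|k(\lambda,\mu)\|_{\B(\Hrond)}^2\,\d\lambda\,\d\mu\big)^{1/2}\|f\|_\H$, the integral operator with kernel $k$ is bounded on all of $\H$; since $W_--1$ is bounded and \eqref{f1} holds on the dense subspace $\D$, this operator coincides with $K$.

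It remains to upgrade boundedness to compactness, and this is the only step that needs a genuine (if short) argument, because when $\Hrond$ is infinite dimensional the square-integrability of $(\lambda,\mu)\mapsto\|k(\lambda,\mu)\|_{\B(\Hrond)}$ is strictly weaker than Hilbert--Schmidtness of $K$ on $\H$. Here one exploits that the fibre values $k(\lambda,\mu)$ lie in $\K(\Hrond)$: for $\delta>0$ let $K_\delta$ be the integral operator whose kernel is $k$ truncated to $\{|\lambda-\mu|\ge\delta\}$. On this set the truncated kernel is norm-continuous and $\K(\Hrond)$-valued on a compact domain, hence $K_\delta$ is a norm limit of finite-rank operators, so $K_\delta\in\K(\H)$; and $\|K-K_\delta\|_{\B(\H)}\le\big(\int\!\!\int_{|\lambda-\mu|<\delta}\|k(\lambda,\mu)\|_{\B(\Hrond)}^2\,\d\lambda\,\d\mu\big)^{1/2}\to0$ as $\delta\searrow0$. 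Therefore $K$ is a norm limit of compact operators, so $K\in\K(\H)$. The only real obstacle is this last compactness upgrade; everything preceding it is the Hilbert--Schmidt-type estimate already exhibited before the statement.
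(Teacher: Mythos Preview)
Your argument is correct and follows the same route as the paper: use the uniform H\"older estimate on $t$ available when $\sigma_p(H)\cap[a,b]=\emptyset$ to bound $\|k(\lambda,\mu)\|_{\B(\Hrond)}$ by ${\rm c}\,|\lambda-\mu|^{\alpha-1}$ and then exploit $\alpha>1/2$ to get square-integrability. The paper stops there, simply declaring $K$ a ``compact-valued Hilbert--Schmidt operator''; you go further and supply the approximation argument (truncation away from the diagonal, continuity of the truncated kernel, norm convergence as $\delta\searrow0$) that actually yields $K\in\K(\H)$ when $\Hrond$ may be infinite-dimensional --- a genuine addition, since square-integrability of the $\B(\Hrond)$-norm of the kernel alone does not force the operator to be Hilbert--Schmidt on $\H$.
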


Let us also note that with this implicit assumption, the map $\Lambda \ni\lambda \mapsto s(\lambda)\in \B(\Hrond)$ is H\"older continuous in norm for any exponent $\alpha <\alpha_0$.

\section{Compactness, the general case}

In this section we do not assume that the point spectrum inside $\Lambda$ is empty. However, we shall impose a stronger regularity to the kernel of the operator $V$ in order to ensure the compactness of $K$.

First of all, as mentioned in Section~\ref{introFrie} there is only a finite number of embedded eigenvalues and each one is of finite multiplicity.
So, let us denote by $\{\lambda_n\}_{n=1}^N\subset \Lambda$ the finite set of embedded eigenvalues, repeated accordingly to multiplicity, and let $\{f_n\}_{n=1}^N \subset \H$ be a family of corresponding mutually orthogonal eigenfunctions. Without loss of generality, we assume that each $f_n$ is of norm $1$. The one-dimensional orthogonal projection on $f_n$ is denoted by $|f_n\rangle \langle f_n|$.

Instead of directly studying the kernel defined in \eqref{defdeK} we shall come back to its original form in terms of the operator $T(z)$ defined in \eqref{defTz}. More precisely, let us consider the kernel
\begin{align}
\nonumber &-(\lambda - \mu - i0)^{-1} \Big\{
\big[V-V\;\!R(\mu + i0)\;\!V\big](\lambda,\mu) -
\big[V-V\;\!R(\mu + i0)\;\!V\big](\mu,\mu)
\Big\} \\
\label{terme1} &=
- (\lambda - \mu - i0)^{-1} \big\{v(\lambda,\mu)-v(\mu,\mu)\big\}  \\
\label{terme2} &\qquad +
(\lambda - \mu - i0)^{-1} \Big\{
\big[V\;\!R(\mu + i0)\;\!V\big](\lambda,\mu) -
\big[V\;\!R(\mu + i0)\;\!V\big](\mu,\mu)
\Big\}\ .
\end{align}
Clearly, by taking the H\"older continuity of the kernel of $V$ into account one has
\begin{equation*}
\int_\Lambda \int_\Lambda |\lambda -\mu|^{-2} \big\|v(\lambda,\mu)-v(\mu,\mu)\big\|^2_{\B(\Hrond)}
\d \lambda \;\!\d \mu
\leq {\rm Const.}~\int_\Lambda \int_\Lambda
|\lambda -\mu|^{2(\alpha_0-1)} \d \lambda \;\!\d \mu<\infty\ .
\end{equation*}
It follows that the operator corresponding to the kernel \eqref{terme1} is again a compact-valued Hilbert-Schmidt operator.

For the second term we shall consider the following decomposition $1 = P + \sum_{n=1}^N |f_n\rangle \langle f_n|$, with $P:= 1-\sum_{n=1}^N |f_n\rangle \langle f_n|$, which is going to be introduced on the right of the factors $R(\mu + i0)$ in \eqref{terme2}. First, some easy computations lead to the following inequalities:
\begin{align*}
&\int_\Lambda \int_\Lambda |\lambda -\mu|^{-2}\Big\|
\big[V\;\!R(\mu + i0)|f_n\rangle \langle f_n|\;\!V\big](\lambda,\mu) -\big[V\;\!R(\mu + i0)|f_n\rangle \langle f_n|\;\!V\big](\mu,\mu)
\Big\|_{\B(\Hrond)}^2\d \lambda \;\!\d \mu \\
&=\int_\Lambda \int_\Lambda |\lambda -\mu|^{-2}\Big\|
\big[ \frac{1}{\lambda_n-\mu-i0}\int_\Lambda \;\!\big[v(\lambda,\nu)-v(\mu,\nu)\big] |f_n(\nu)\rangle_\Hrond {}_\Hrond\!\langle [Vf_n](\mu)| \d \nu
\Big\|_{\B(\Hrond)}^2\d \lambda \;\!\d \mu \\
&\leq {\rm Const.}~\|f_n\|^2_\H \int_\Lambda \int_\Lambda |\lambda-\mu|^{2(\alpha_0-1)}\big\|\big[R_0(\lambda_n +i0)Vf_n\big](\mu)\big\|_\Hrond^2\;\!\d \lambda \;\!\d \mu\ .
\end{align*}
Since $\alpha_0>1/2$ the estimate $\sup_{\mu\in \Lambda}\int_\Lambda|\lambda -\mu|^{2(\alpha_0-1)}\;\!\d \lambda<\infty$ is satisfied. Thus the above expression is finite if $R_0(\lambda_n+i0)Vf_n$ belongs to $\H$. However, we shall show subsequently that $-R_0(\lambda_n \pm i 0)Vf_n=f_n$, which clearly justifies the claim.

Now, we shall concentrate on proving the following inequality:
\begin{equation}\label{avoir}
\int_\Lambda \int_\Lambda |\lambda -\mu|^{-2}
\Big\|
\big[V\;\!R(\mu + i0)P\;\!V\big](\lambda,\mu) -
\big[V\;\!R(\mu + i0)P\;\!V\big](\mu,\mu)
\Big\|_{\B(\Hrond)}^2\d \lambda \;\!\d \mu < \infty.
\end{equation}
This will be obtained by imposing a stronger regularity to the function $v$. At the end, by collecting these various results it will prove that the term $K$ is equal to a finite sum of compact-valued Hilbert-Schmidt operators.

So let us study of the operator $VR(z)PV$ for $z$ in the upper half complex plane. For that purpose and following \cite[Sec.~4.1]{Y}, we introduce for any $\alpha \in (0,1]$ the Banach space $C^\alpha_0(\Lambda;\Hrond)$ of $\Hrond$-valued H\"older continuous functions vanishing at $a$ and $b$ endowed with the norm
$$
\|f\|_\alpha := \sup_{\lambda,\lambda' \in \Lambda} \Big(\|f(\lambda)\|_\Hrond+ \|f(\lambda)-f(\lambda')\|_\Hrond \Big/  |\lambda - \lambda' |^{\alpha}\Big) .
$$
This space is not separable. We therefore define the Banach space  $\dot{C}^\alpha_0(\Lambda;\Hrond)$ as the closure of $C^\infty_0(\Lambda;\Hrond)$ with the above norm.
Clearly the inclusion $\dot{C}^\alpha_0(\Lambda;\Hrond) \subset C^\alpha_0(\Lambda;\Hrond)$ holds, but one also has $C^{\alpha_1}_0(\Lambda;\Hrond) \subset \dot{C}^{\alpha_2}_0(\Lambda;\Hrond)$ if $\alpha_2<\alpha_1\leq 1$.

Now, let us define $A(z):=-VR_0(z)$. It is proved in \cite[Lem.~4.1.2]{Y} that for any $z \in \Pi$ and any $\alpha_j<\alpha_0$ the operator $A(z)$ is compact from $C^{\alpha_1}_0(\Lambda;\Hrond)$ to $\dot{C}^{\alpha_2}_0(\Lambda;\Hrond)$. In particular, it follows from this and from the Fredholm alternative for Banach spaces that the operator $1-A(z)$ is invertible in the space $\dot{C}^{\alpha}_0(\Lambda;\Hrond)$ for any $\alpha \in \big(0,\alpha_0)$ whenever the equation $A(z)f=f$ has no nontrivial solution. Equivalently, this corresponds to the fact that $\ker\big(1-A(z)\big)=\{0\}$ in $\dot{C}^{\alpha}_0(\Lambda;\Hrond)$.
So, in order to study the operator
$$
R(z)P=R_0(z)\big(1-A(z)\big)^{-1}P
$$
in a suitable space, one needs to get a better understanding of the operator $P= 1-\sum_{n=1}^N |f_n\rangle \langle f_n|$. This is the content of the next lemma and its corollary. Its assumption is highlighted before the statement.
We refer to \cite{GJ,MW} and to references mentioned therein for related statements on the regularity of eigenfunctions corresponding to embedded eigenvalues.

\begin{Assumption}\label{stronger}
For each $\mu \in \Lambda$ the map
$
\Lambda \ni \lambda \mapsto v(\lambda,\mu)\in \K(\Hrond)
$
is norm-differentiable, with derivative denoted by $v'(\lambda,\mu)$, and the map $v' : \Lambda \times \Lambda\to \K(\Hrond)$ is a H\"older continuous function of exponent $\alpha_0'\in (0,1]$. Furthermore $v'(a,\mu)=v'(b,\mu)=0$ for arbitrary $\mu \in \Lambda$.
\end{Assumption}

Clearly, if $v$ satisfies this assumption it also satisfies the original regularity condition with $\alpha_0=1$. In the sequel, we shall tacitly take this fact into account. For any $\lambda \in \Lambda$ and any $\varepsilon>0$, we set
$$
o_\varepsilon(\lambda):=(\lambda-\varepsilon,\lambda+\varepsilon)\cap \Lambda\ .
$$

\begin{Lemma}\label{regul}
Suppose that Assumption \ref{stronger} holds for some $\alpha_0'\in (0,1]$.
Then for each eigenvalue $\lambda_n \in \Lambda$ of $H$ the corresponding eigenfunction $f_n$ belongs to $C^{\alpha_0'}_0(\Lambda;\Hrond)$.
\end{Lemma}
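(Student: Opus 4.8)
The plan is to exploit the eigenvalue equation in the form $f_n = -R_0(\lambda_n + i0)Vf_n$ (which the paper announces it will establish anyway) together with the mapping properties of $A(z) = -VR_0(z)$ between the Hölder spaces recalled from \cite[Lem.~4.1.2]{Y}. Schematically: $f_n$ is a fixed point of $A(\lambda_n+i0)$, and since $A(\lambda_n+i0)$ maps into $\dot C^{\alpha}_0(\Lambda;\Hrond)$ for every $\alpha<\alpha_0$, we already get $f_n \in \dot C^{\alpha}_0$ for all such $\alpha$, hence in particular $f_n$ is continuous and vanishes at $a$ and $b$. The point of Assumption \ref{stronger} is to upgrade this to membership in the \emph{non-dotted} space $C^{\alpha_0'}_0$, i.e.\ to gain a genuine Hölder exponent equal to that of $v'$ rather than merely something strictly below $1$.

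First I would make the identity $f_n = -R_0(\lambda_n+i0)Vf_n$ rigorous: starting from $Hf_n = \lambda_n f_n$, write $(H_0-\lambda_n)f_n = -Vf_n$, and since $\lambda_n$ is in the interior of $\Lambda$ one must pass to the boundary value of the resolvent; the kernel of $Vf_n$, namely $g_n(\lambda):=[Vf_n](\lambda)=\int_\Lambda v(\lambda,\mu)f_n(\mu)\,\d\mu$, is Hölder continuous and vanishes at $a,b$ by the hypotheses on $v$, so $R_0(\lambda_n\pm i0)g_n$ is well defined pointwise by
$$
[R_0(\lambda_n\pm i0)g_n](\lambda) = (\lambda-\lambda_n\pm i0)^{-1} g_n(\lambda),
$$
and one checks that the only $\ltwo$ solution of $(H_0-\lambda_n)u = -g_n$ is $u=f_n$ with this interpretation (the $(\lambda-\lambda_n)^{-1}$ singularity is killed because $g_n(\lambda_n)$ must vanish for $u$ to be square-integrable). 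Once this is in place, write
$$
f_n(\lambda) = -(\lambda-\lambda_n)^{-1}\,\bigl(g_n(\lambda)-g_n(\lambda_n)\bigr),
$$
using $g_n(\lambda_n)=0$, and the regularity of $f_n$ becomes a regularity question about the \emph{difference quotient} of $g_n$ at the point $\lambda_n$.

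Next, the heart of the matter: show $g_n \in C^{1+\alpha_0'}$ near $\lambda_n$, from which the difference quotient $(\lambda-\lambda_n)^{-1}(g_n(\lambda)-g_n(\lambda_n))$ inherits Hölder exponent $\alpha_0'$. By Assumption \ref{stronger} we may differentiate under the integral sign:
$$
g_n'(\lambda) = \int_\Lambda v'(\lambda,\mu) f_n(\mu)\,\d\mu,
$$
and since $v'$ is Hölder continuous of exponent $\alpha_0'$ in its first variable, uniformly in $\mu$, and $f_n\in\ltwo$, the function $g_n'$ is Hölder continuous of exponent $\alpha_0'$ on all of $\Lambda$; moreover $g_n'(a)=g_n'(b)=0$ because $v'(a,\mu)=v'(b,\mu)=0$. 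Thus $g_n \in C^{1,\alpha_0'}$. A first-order Taylor expansion at $\lambda_n$ then gives $g_n(\lambda)-g_n(\lambda_n)=(\lambda-\lambda_n)g_n'(\lambda_n)+O(|\lambda-\lambda_n|^{1+\alpha_0'})$, so
$$
f_n(\lambda) = -g_n'(\lambda_n) - (\lambda-\lambda_n)^{-1}\!\int_{\lambda_n}^{\lambda}\!\bigl(g_n'(s)-g_n'(\lambda_n)\bigr)\d s,
$$
and the remainder term is $\alpha_0'$-Hölder in a neighbourhood of $\lambda_n$; away from $\lambda_n$ the formula $f_n=-R_0(\lambda_n+i0)g_n$ already shows $f_n$ is at least as regular as $g_n$, hence $\alpha_0'$-Hölder there too. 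Patching the local estimates gives $f_n \in C^{\alpha_0'}(\Lambda;\Hrond)$, and the vanishing $f_n(a)=f_n(b)=0$ follows either from $g_n(a)=g_n(b)=0$ (so the quotient is finite and, at the endpoints, equals $-g_n'(a)/1\cdot 0$-type expressions forcing the value $0$ via $v(a,\mu)=v(b,\mu)=0$) or, more directly, from the already-established membership $f_n\in\dot C^{\alpha}_0$. Hence $f_n\in C^{\alpha_0'}_0(\Lambda;\Hrond)$.

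\textbf{Main obstacle.} The delicate point is the rigorous justification of $f_n = -R_0(\lambda_n\pm i0)Vf_n$ with the correct pointwise interpretation of the boundary resolvent — in particular arguing that $g_n(\lambda_n)=0$ is \emph{forced} by $f_n\in\ltwo$, so that the singular factor $(\lambda-\lambda_n)^{-1}$ produces no genuine singularity. Everything after that is differentiation under the integral sign plus elementary Taylor estimates, which are routine given Assumption \ref{stronger}; but one should be a little careful that all Hölder bounds are uniform in the second variable $\mu$ so that integration against the fixed $\ltwo$ function $f_n$ preserves the exponent.
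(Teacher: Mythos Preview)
Your proposal is correct and follows essentially the same route as the paper: both derive $f_n(\lambda)=-(\lambda-\lambda_n)^{-1}[Vf_n](\lambda)$ from the eigenvalue equation, observe that $[Vf_n](\lambda_n)=0$ is forced by $f_n\in\H$, and then use the differentiability of $v$ in its first variable to rewrite $[Vf_n](\lambda)$ as $(\lambda-\lambda_n)$ times an integral of $[V'f_n]$, from which the $\alpha_0'$-H\"older regularity of $f_n$ is read off. The only cosmetic difference is that the paper writes the Taylor step in the parametrized form $[Vf_n](\lambda)=(\lambda-\lambda_n)\int_0^1 [V'f_n]\big(\lambda_n+s(\lambda-\lambda_n)\big)\,\d s$, which makes the H\"older estimate on $f_n$ immediate, and it treats the boundary case $\lambda_n\in\{a,b\}$ explicitly via the hypothesis $v'(a,\mu)=v'(b,\mu)=0$.
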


\begin{proof}
Assume first that $f_n \in \H$ satisfies $Hf_n = \lambda_n f_n$ for some $\lambda_n \in (a,b)$. It implies that for almost every $\lambda \in \Lambda\setminus\{\lambda_n\}$ one has $f_n(\lambda)=-\frac{1}{\lambda -\lambda_n}[Vf_n](\lambda)$. Then,  since the r.h.s.~is well defined for every $\lambda \in \Lambda \setminus\{\lambda_n\}$ and $Vf_n\in C^1_0(\Lambda;\Hrond)$, one infers in particular that $f_n \in C^{\alpha_0'}_0(\Lambda \setminus o_\varepsilon(\lambda_n);\Hrond)$ for any $\varepsilon>0$. In other words, one can choose a representative element of $f_n \in \H$ in $C^{\alpha_0'}_0(\Lambda \setminus o_\varepsilon(\lambda_n);\Hrond)$.
Furthermore, one also infers that the property $[Vf_n](\lambda_n)=0$ holds, and by taking then the regularity condition on $v$ into account it follows that
\begin{equation}\label{vendredi}
[Vf_n](\lambda)= [Vf_n](\lambda) - [Vf_n](\lambda_n)
=(\lambda - \lambda_n) \int_0^1 [V'f_n]\big(\lambda_n + s(\lambda - \lambda_n)\big)\d s
\end{equation}
with $[V'f_n](\lambda) =\int_\Lambda v'(\lambda,\mu)f_n(\mu) \d \mu$. By inserting the r.h.s.~of \eqref{vendredi} in the equality  $f_n(\lambda)=-\frac{1}{\lambda -\lambda_n}[Vf_n](\lambda)$ one deduces that $f_n \in C^{\alpha_0'}_0(\Lambda;\Hrond)$ and that $f_n(\lambda_n)= -[V'f_n](\lambda_n)$. This proves the statement for $\lambda_n \in (a,b)$. Finally, the special case $\lambda_n\in \{a,b\}$ is proved  similarly by taking the additional condition $v'(a,\mu)=v'(b,\mu)=0$ into account.
\end{proof}

One easily deduces from this lemma the following consequence on the operator $P$:

\begin{Corollary}\label{corol1}
Let Assumption \ref{stronger} hold for some $\alpha_0'\in (0,1]$. Then
the operator $P$ belongs to  $\B\big(\dot{C}^{\alpha}_0(\Lambda;\Hrond)\big)$ for any $\alpha<\alpha_0'$.
\end{Corollary}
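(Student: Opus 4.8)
The plan is to reduce the statement to the boundedness on $\dot{C}^{\alpha}_0(\Lambda;\Hrond)$ of each rank-one operator $|f_n\rangle\langle f_n|$, and then to deduce this boundedness directly from the regularity of the eigenfunctions obtained in Lemma \ref{regul}. First, since the identity operator trivially belongs to $\B\big(\dot{C}^{\alpha}_0(\Lambda;\Hrond)\big)$ and the sum $P = 1 - \sum_{n=1}^N |f_n\rangle\langle f_n|$ is finite, it suffices to show that for each $n$ the operator $|f_n\rangle\langle f_n|$ maps $\dot{C}^{\alpha}_0(\Lambda;\Hrond)$ continuously into itself. For $g\in \dot{C}^{\alpha}_0(\Lambda;\Hrond)$ one has, pointwise, $\big[|f_n\rangle\langle f_n|\;\!g\big](\lambda)=\langle f_n,g\rangle_\H\;\!f_n(\lambda)$, so that the image of $g$ is simply the scalar $\langle f_n,g\rangle_\H$ times the fixed function $f_n$; note that the pairing makes sense because $\dot{C}^{\alpha}_0(\Lambda;\Hrond)\subset C(\Lambda;\Hrond)\subset \H$ on the finite interval $\Lambda$.

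Next I would estimate the scalar factor. Since $\|g(\lambda)\|_\Hrond\leq \|g\|_\alpha$ for every $\lambda\in\Lambda$ and $\Lambda=[a,b]$ is bounded, one gets $\|g\|_\H\leq (b-a)^{1/2}\|g\|_\alpha$, hence $|\langle f_n,g\rangle_\H|\leq \|f_n\|_\H\;\!(b-a)^{1/2}\|g\|_\alpha=(b-a)^{1/2}\|g\|_\alpha$, using $\|f_n\|_\H=1$. Consequently $\big\||f_n\rangle\langle f_n|\;\!g\big\|_\alpha=|\langle f_n,g\rangle_\H|\;\!\|f_n\|_\alpha\leq (b-a)^{1/2}\|f_n\|_\alpha\;\!\|g\|_\alpha$.

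It then only remains to check that $\|f_n\|_\alpha$ is finite and that the image actually lies in $\dot{C}^{\alpha}_0(\Lambda;\Hrond)$. By Lemma \ref{regul}, the continuous representative of $f_n$ provided there belongs to $C^{\alpha_0'}_0(\Lambda;\Hrond)$; since $\alpha<\alpha_0'\leq 1$, the inclusion $C^{\alpha_0'}_0(\Lambda;\Hrond)\subset \dot{C}^{\alpha}_0(\Lambda;\Hrond)$ recalled in Section \ref{introFrie} yields both $\|f_n\|_\alpha<\infty$ and $f_n\in\dot{C}^{\alpha}_0(\Lambda;\Hrond)$, whence the same holds for any scalar multiple of $f_n$. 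Combining the three steps shows that $|f_n\rangle\langle f_n|\in\B\big(\dot{C}^{\alpha}_0(\Lambda;\Hrond)\big)$ with operator norm at most $(b-a)^{1/2}\|f_n\|_\alpha$, and therefore $P\in\B\big(\dot{C}^{\alpha}_0(\Lambda;\Hrond)\big)$.

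The only substantive input is Lemma \ref{regul}; the rest is the routine observation that a rank-one operator built from a H\"older function and an $L^2$-pairing is bounded on a H\"older space over a finite interval, so I do not expect a genuine obstacle. The one point to keep track of is that one must work consistently with the continuous representative of the eigenfunction supplied by Lemma \ref{regul}, so that the pairing $\langle f_n,\cdot\rangle_\H$ and the multiplication by $f_n$ refer to the same object.
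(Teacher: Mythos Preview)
Your argument is correct and is precisely the elaboration the paper has in mind: the authors give no explicit proof but simply state that the corollary ``easily'' follows from Lemma~\ref{regul}, and your decomposition $P=1-\sum_n |f_n\rangle\langle f_n|$ together with the inclusion $C^{\alpha_0'}_0(\Lambda;\Hrond)\subset \dot C^{\alpha}_0(\Lambda;\Hrond)$ is the intended route. One cosmetic point: that inclusion is recalled in the present section (just before Lemma~\ref{regul}), not in Section~\ref{introFrie}, so adjust the cross-reference.
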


Now, in order to study the operator
\begin{equation}\label{explicit}
VR(z)PV= VR_0(z)\big(1-A(z)\big)^{-1}PV = -A(z) \big(1-A(z)\big)^{-1}PV,
\end{equation}
we shall suppose that Assumption \ref{stronger} holds and consider $\alpha\in (0,\alpha_0')$.
Recall first that $V$ maps $\H$ into $C^{1}_0(\Lambda;\Hrond)\subset \dot{C}^{\alpha}_0(\Lambda;\Hrond)$, and that $P$ maps $\dot{C}^{\alpha}_0(\Lambda;\Hrond)$ into itself.
Thus it is natural to consider the operator $\big(1-A(z)\big)^{-1}$ on $\dot{C}^{\alpha}_0(\Lambda;\Hrond)$.
For that purpose, we recall that the solutions of the equation $A(z)f=f$ in $\dot{C}^{\alpha}_0(\Lambda;\Hrond)$ are in one-to-one relation with the eigenfunctions of the operator $H$ \cite[Lem.~4.1.4]{Y}. More precisely, if $f$ is an eigenfunction of $H$ associated with the eigenvalue $\lambda \in \sigma_p(H)$, then $g:=-Vf$ satisfies $A(\lambda\pm i0)g=g$ (solution for both signs simultaneously). Alternatively, if $g$ is a solution of the equation $A(\lambda \pm i 0)g=g$ for some $\lambda \in \R$, then $g(\lambda)=0$ and $R_0(\lambda \pm i 0)g$  is an eigenfunction of $H$ associated with the eigenvalue $\lambda$, and thus $\lambda\in \sigma_p(H)$.
These relations imply in particular that $-R_0(\lambda_n \pm i 0)V f_n = f_n$, or in other words $f_n$ is an eigenfunction of the operator $-R_0(\lambda_n \pm i 0)V$ associated with the eigenvalue $1$.

Now, it follows from \cite[Lem.~4.1.2]{Y} that for $z \in \Pi$ the operator $A(z)$ belongs to $\K\big(\dot{C}^{\alpha}_0(\Lambda;\Hrond)\big)$, and
for $z \in \Pi \setminus \sigma_p(H)$ the operator $\big(1-A(z)\big)^{-1}$ is an element of $\B\big(\dot{C}^{\alpha}_0(\Lambda;\Hrond)\big)$.
Furthermore, the maps
\begin{equation*}
\Pi \ni z \mapsto
A(z) \in \K\big(\dot{C}^{\alpha}_0(\Lambda;\Hrond)\big)
\end{equation*}
and
\begin{equation*}
\Pi \setminus \sigma_p(H) \ni z \mapsto
\big(1-A(z)\big)^{-1} \in \B\big(\dot{C}^{\alpha}_0(\Lambda;\Hrond)\big)
\end{equation*}
are norm-continuous. We shall show that this latter result can be extended for all $z \in \Pi_d:=\big(\Pi \setminus \sigma_p(H)\big)\cup \{\lambda_1, \dots, \lambda_n\}$ once the operator $P$ is applied on the right of the operator  $\big(1-A(z)\big)^{-1}$ as in \eqref{explicit}. Note that $\Pi_d$ is equal to $\Pi$ with the discrete spectrum of $H$ excluded

\begin{Lemma}\label{continu}
Suppose that Assumption \ref{stronger} holds for some $\alpha_0'\in (0,1]$.
Then for each $\alpha \in (0,\alpha_0')$ and  each $z \in \Pi_d$ the operator $\big(1-A(z)\big)^{-1}P$ is well defined and bounded on $\dot{C}^{\alpha}_0(\Lambda;\Hrond)$. Furthermore, the map
\begin{equation*}
\Pi_d \ni z \mapsto
\big(1-A(z)\big)^{-1}P \in \B\big(\dot{C}^{\alpha}_0(\Lambda;\Hrond)\big)
\end{equation*}
is norm-continuous.
\end{Lemma}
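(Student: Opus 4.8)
The plan is to treat $\big(1-A(z)\big)^{-1}P$ as a perturbation of the analytic family $\big(1-A(z)\big)^{-1}$ by finite-rank corrections coming from the eigenprojections, and to show that these corrections remain bounded (indeed continuous) even as $z$ approaches an eigenvalue $\lambda_n$. First I would record what is already known: by \cite[Lem.~4.1.2]{Y} the map $\Pi\ni z\mapsto A(z)\in\K\big(\dot C^\alpha_0(\Lambda;\Hrond)\big)$ is norm-continuous, and on the open set $\Pi\setminus\sigma_p(H)$ the inverse $\big(1-A(z)\big)^{-1}$ exists and is norm-continuous. By Corollary \ref{corol1}, $P\in\B\big(\dot C^\alpha_0(\Lambda;\Hrond)\big)$ whenever $\alpha<\alpha_0'$, since the eigenfunctions $f_n$ lie in $C^{\alpha_0'}_0(\Lambda;\Hrond)$ by Lemma \ref{regul} and hence $|f_n\rangle\langle f_n|$ (the rank-one operator $g\mapsto \langle f_n,g\rangle_\H\, f_n$, with the $\H$-pairing well defined on $\dot C^\alpha_0\subset\H$) maps $\dot C^\alpha_0$ continuously into itself. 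So on $\Pi\setminus\sigma_p(H)$ the composition $\big(1-A(z)\big)^{-1}P$ is automatically bounded and norm-continuous; the whole content of the lemma is the behaviour at the points $\lambda_1,\dots,\lambda_N$.

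The key algebraic step is a local resolvent-type identity near a fixed eigenvalue $\lambda=\lambda_n$. Fix one such point and let $E$ be the (finite-rank) sum of the projections $|f_j\rangle\langle f_j|$ over those $j$ with $\lambda_j=\lambda$, so $P=1-E-(\text{other projections})$; the other projections cause no trouble near $\lambda$ and can be absorbed into the already-continuous part, so WLOG work with $P_n:=1-E$. Using the one-to-one correspondence of \cite[Lem.~4.1.4]{Y} between solutions of $A(\lambda\pm i0)g=g$ and eigenfunctions, together with the identity $-R_0(\lambda\pm i0)Vf_j=f_j$ already derived in the excerpt, one knows $\ker\big(1-A(\lambda\pm i0)\big)$ is spanned by $\{g_j:=-Vf_j\}$, a finite set in $\dot C^\alpha_0(\Lambda;\Hrond)$. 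I would then write $1-A(z)=\big(1-A(z)\big)\big(P_n+E\big)$ and look for $\big(1-A(z)\big)^{-1}P_n$ directly: since $A(z)\to A(\lambda\pm i0)$ in norm and $1-A(\lambda\pm i0)$ is Fredholm of index zero with kernel complementary (in the Banach-space sense) to $P_n\dot C^\alpha_0$, the restriction of $1-A(z)$ to a fixed closed complement of $\mathrm{span}\{g_j\}$ is invertible for $z$ near $\lambda$ with uniformly bounded, norm-continuous inverse by standard analytic-Fredholm perturbation theory; composing that inverse with $P_n$ (which annihilates $f_j$ but not $g_j$, so one must be slightly careful — actually one composes with the projection onto that complement along $\mathrm{span}\{g_j\}$ and checks it agrees with $P_n$ up to a continuous correction) gives the desired bounded, continuous operator. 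The cleanest formulation: choose bounded projections $\Pi_{\mathrm{fin}}(z)$ onto the "near-kernel" spectral subspace of $A(z)$ for the eigenvalue cluster near $1$, which depend norm-continuously on $z$ in $\Pi$ by the continuity of $A(z)$ and Riesz-projection calculus; on the complementary range $1-A(z)$ is boundedly invertible and continuous, and $\big(1-\Pi_{\mathrm{fin}}(z)\big)P_n$ is continuous, while $\Pi_{\mathrm{fin}}(z)P_n$ is finite-rank with continuously varying range and must be shown to extend continuously — this is where the vanishing $g_j(\lambda)=0$ and the explicit identities from \cite[Lem.~4.1.4]{Y} are used to see that $P_n$ exactly kills the obstruction.

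The main obstacle is precisely the last point: showing that the finite-rank "singular" part of $\big(1-A(z)\big)^{-1}$ — which blows up like $(z-\lambda)^{-1}$ as $z\to\lambda$ because $1$ is an eigenvalue of $A(\lambda\pm i0)$ — is annihilated by $P$ applied on the right, uniformly and continuously. Concretely, the Laurent expansion of $\big(1-A(z)\big)^{-1}$ at $z=\lambda$ has a principal part whose residue is built from the eigenprojection of $A(\lambda\pm i0)$ at eigenvalue $1$, i.e. from the functions $g_j=-Vf_j$ and the corresponding left eigenvectors; I would show that the right eigenvectors / the structure of this residue are such that it factors through the span of the $f_j$'s (via the relation $f_j=-R_0(\lambda\pm i0)g_j$ and the duality built into $V=V^*$-type symmetry of the kernel), so that multiplication by $P=1-\sum|f_j\rangle\langle f_j|$ on the right exactly cancels the pole. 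Establishing this cancellation rigorously — identifying the residue, pairing it against $P$, and verifying the remainder is norm-continuous up to $z=\lambda$ — is the technical heart of the proof; everything else is bookkeeping with the analytic-Fredholm theorem in the Banach spaces $\dot C^\alpha_0(\Lambda;\Hrond)$ and the already-cited continuity of $z\mapsto A(z)$.
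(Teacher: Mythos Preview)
Your strategy is in the right direction but it misses the one observation that makes the argument short, and this shows up exactly at the point you yourself flag as the obstacle. You correctly identify $\ker\big(1-A(\lambda_n\pm i0)\big)=\mathrm{span}\{g_j:=-Vf_j\}$, and you then worry that $P$ annihilates the $f_j$ rather than the $g_j$, so that $P$ is not obviously the ``right'' projection to kill the singular part. Your proposed fix --- replace $P$ by a projection along $\mathrm{span}\{g_j\}$ and argue it agrees with $P$ up to a continuous correction, or chase the residue of the Laurent expansion and hope it factors through $\mathrm{span}\{f_j\}$ --- is left unresolved, and in a general Banach setting (where the pole of the analytic-Fredholm family need not be simple) this route would require further work on semisimplicity that is never addressed.

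The paper bypasses all of this with a single duality observation: since $A(\lambda_n+i0)=-VR_0(\lambda_n+i0)$, the Banach-space adjoint satisfies $A(\lambda_n+i0)^*=-R_0(\lambda_n-i0)V$, and the solutions of $\big(1-A(\lambda_n+i0)\big)^*F=0$ in $\dot C^\alpha_0(\Lambda;\Hrond)^*$ are precisely the eigenfunctions $f_j$ themselves, acting as functionals via the $\H$-pairing. The Fredholm alternative then says $\big(1-A(\lambda_n+i0)\big)f=g$ is solvable exactly when $\langle f_j,g\rangle_\H=0$ for all $j$, i.e.\ exactly when $g\in P\,\dot C^\alpha_0(\Lambda;\Hrond)$. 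So $P$ is not an approximation to the correct projection: its range is literally $\mathrm{Ran}\big(1-A(\lambda_n+i0)\big)$, on which $\big(1-A(\lambda_n+i0)\big)^{-1}$ is bounded by the open mapping theorem. Continuity in $z$ then follows from the norm-continuity of $A(z)$ by a standard perturbation argument. What your Riesz-projection and Laurent-expansion machinery would eventually have to reproduce is this identification of the \emph{cokernel} with $\mathrm{span}\{f_j\}$; once you see that, the detour through residues becomes unnecessary.
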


\begin{proof}
Clearly, we can concentrate on the neighbourhood of a singular point $\lambda_n\in \Lambda$ and consider only the limit from above (the limit from below is completely similar). For that purpose, let us simply set $T(\lambda_n):=\big(1-A(\lambda_n+i0)\big)\in \B\big(\dot{C}^{\alpha}_0(\Lambda;\Hrond)\big)$. By the Fredholm alternative for Banach spaces, the equation $T(\lambda_n)f=0$ with $f \in \dot{C}^{\alpha}_0(\Lambda;\Hrond)$ and the equation $T(\lambda_n)^*F=0$ in the adjoint space $\dot{C}^{\alpha}_0(\Lambda;\Hrond)^*$ have the same finite number $m$ of linearly independent solutions, which we denote respectively by $f^j$ and $F^j$. Furthermore, the equation $T(\lambda_n)f=g$ has a solution for a given $g \in \dot{C}^{\alpha}_0(\Lambda;\Hrond)$ if and only if $F^j(g)=0$. In such a case, the solution is obtained by $f = T(\lambda_n)^{-1}g$.

Now, since $A(\lambda_n +i0)^* = -R_0(\lambda_n - i0)V$, one observes that the elements $F^j$ are nothing but linearly independent elements of the subspace of $\H$ generated by the eigenfunctions of $A(\lambda_n+i0)^*$ associated with the  eigenvalue $1$ (which means that $m$ is equal to the multiplicity of the eigenvalue $\lambda_n$). Furthermore, the equation $F^j(g)=0$ reduces to $\langle F^j,g\rangle_\H=0$. Then, by choosing $g \in P\dot{C}^{\alpha}_0(\Lambda;\Hrond)\subset \dot{C}^{\alpha}_0(\Lambda;\Hrond)$, one clearly has $F^j \bot g$, which means that the condition $F^j(g)=0$ is satisfied. One concludes that on the set $P\dot{C}^{\alpha}_0(\Lambda;\Hrond)$ the operator $T(\lambda_n)$ has a bounded inverse. Finally, the continuity follows from a straightforward argument, see for example of proof of \cite[Lem.~1.8.1]{Y}.
\end{proof}

Before proving the main result of this section, let us observe that for $z \in \C \setminus \R$ the operator $VR(z)P$ is an integral operator. Indeed, from the relation $VR(z) = T(z)R_0(z)$ and since $T(z)$ is an integral operator, one infers that $VR(z)$ is an integral operator. Then, multiplying this operator by $P= 1-\sum_{n=1}^N |f_n\rangle \langle f_n|$ does not change this property.

\begin{Proposition}\label{vp}
Suppose that Assumption \ref{stronger} holds for some $\alpha_0' \in (1/2,1]$.
Then the inequality \eqref{avoir} is satisfied, and thus the term $K$ is a finite sum of compact-valued Hilbert-Schmidt operators.
\end{Proposition}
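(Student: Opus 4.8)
The plan is to establish the single remaining estimate \eqref{avoir}; granting it, the Proposition follows at once, since in the kernel \eqref{defdeK}, which is the sum of the expressions \eqref{terme1} and \eqref{terme2}, the term \eqref{terme1} produces a compact-valued Hilbert--Schmidt operator (as already shown), the $N$ rank-one terms obtained by inserting $1=P+\sum_n|f_n\rangle\langle f_n|$ to the right of $R(\mu+i0)$ in \eqref{terme2} produce compact-valued Hilbert--Schmidt operators by the computation preceding \eqref{avoir} (which is licit because $-R_0(\lambda_n\pm i0)Vf_n=f_n\in\H$), and the $P$-term produces a compact-valued Hilbert--Schmidt operator precisely by \eqref{avoir}. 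So fix once and for all $\alpha\in(1/2,\alpha_0')$, which is possible because $\alpha_0'>1/2$. The first step is to exhibit the kernel of $VR(z)PV$. By Assumption \ref{stronger} the function $v$ is norm-$C^1$ in its first variable, so $v(\cdot,\nu)\in C^1_0(\Lambda;\B(\Hrond))\subset\dot{C}^\alpha_0(\Lambda;\B(\Hrond))$ with $\sup_{\nu\in\Lambda}\|v(\cdot,\nu)\|_{\dot{C}^\alpha_0(\Lambda;\B(\Hrond))}<\infty$. Moreover $V$ maps $\H$ into $C^1_0(\Lambda;\Hrond)\subset\dot{C}^\alpha_0(\Lambda;\Hrond)$, $P$ is bounded on $\dot{C}^\alpha_0(\Lambda;\Hrond)$ by Corollary \ref{corol1} (which rests on Lemma \ref{regul}), $A(z)$ is bounded on $\dot{C}^\alpha_0(\Lambda;\Hrond)$ by \cite[Lem.~4.1.2]{Y}, and $\big(1-A(z)\big)^{-1}P$ is bounded on $\dot{C}^\alpha_0(\Lambda;\Hrond)$ by Lemma \ref{continu}. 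Composing these bounded operators (which commute with the $\nu$-integration defining the kernel) on the left of the integral operator $V$, one deduces from \eqref{explicit} that, for $z\in\Pi\setminus\sigma_p(H)$, $VR(z)PV$ is an integral operator whose $\B(\Hrond)$-valued kernel $b(\cdot,\cdot,z)$ satisfies, for each $\nu\in\Lambda$,
$$
b(\cdot,\nu;z):=-A(z)\big(1-A(z)\big)^{-1}P\,[\,v(\cdot,\nu)\,]\ \in\ \dot{C}^\alpha_0\big(\Lambda;\B(\Hrond)\big),
$$
the operators acting componentwise on the $\B(\Hrond)$-valued function $v(\cdot,\nu)$ as in \cite{Y}. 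Since $z\mapsto A(z)$ is norm-continuous on $\Pi$ and $z\mapsto\big(1-A(z)\big)^{-1}P$ is norm-continuous on $\Pi_d$ (Lemma \ref{continu}), the map $z\mapsto b(\cdot,\nu;z)$ extends norm-continuously, in $\dot{C}^\alpha_0$, to all of $\Pi_d$; in particular the boundary values $b(\cdot,\nu;\mu+i0)$ exist for every $\mu\in\Lambda$ (with the convention $\lambda_n+i0:=\lambda_n$ at the embedded eigenvalues) and agree with the kernel occurring in \eqref{terme2}.

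The main step is a uniform estimate on $b$. By the two norm-continuities just invoked, the quantities $\|A(z)\|_{\B(\dot{C}^\alpha_0)}$ and $\big\|\big(1-A(z)\big)^{-1}P\big\|_{\B(\dot{C}^\alpha_0)}$ are bounded uniformly for $z$ in the compact set $\{\mu+i0:\mu\in\Lambda\}\subset\Pi_d$; together with $\sup_{\nu\in\Lambda}\|v(\cdot,\nu)\|_{\dot{C}^\alpha_0(\Lambda;\B(\Hrond))}<\infty$, this yields a constant ${\rm c}$, independent of $\nu,\mu\in\Lambda$, with $\|b(\cdot,\nu;\mu+i0)\|_{\dot{C}^\alpha_0(\Lambda;\B(\Hrond))}\leq{\rm c}$; since the $\dot{C}^\alpha_0$-norm dominates the H\"older seminorm, it follows that
$$
\big\|b(\lambda,\mu;\mu+i0)-b(\mu,\mu;\mu+i0)\big\|_{\B(\Hrond)}\ \leq\ {\rm c}\,|\lambda-\mu|^{\alpha}\qquad\forall\,\lambda,\mu\in\Lambda .
$$
Since this difference vanishes at $\lambda=\mu$, the kernel of the $P$-term of \eqref{terme2}, namely $(\lambda-\mu-i0)^{-1}\big\{b(\lambda,\mu;\mu+i0)-b(\mu,\mu;\mu+i0)\big\}$, coincides with the locally integrable function $(\lambda-\mu)^{-1}\big\{b(\lambda,\mu;\mu+i0)-b(\mu,\mu;\mu+i0)\big\}$, of modulus at most ${\rm c}\,|\lambda-\mu|^{\alpha-1}$. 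Consequently the left-hand side of \eqref{avoir} is dominated by ${\rm c}^2\int_\Lambda\int_\Lambda|\lambda-\mu|^{2(\alpha-1)}\,\d\lambda\,\d\mu$, which is finite because $2(\alpha-1)>-1$ (the inner integral being bounded uniformly in $\mu$). This proves \eqref{avoir}, and with it the Proposition.

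The delicate point is the uniform control of $\|b(\cdot,\nu;\mu+i0)\|_{\dot{C}^\alpha_0}$ just described. Its crux is that one is compelled to work in $\dot{C}^\alpha_0$ with $\alpha<\alpha_0'$ rather than in $\dot{C}^1_0$, which the mere $C^1$-regularity of $v$ would otherwise permit: the projection $P$ carries along the eigenfunctions $f_n$, whose regularity is only $C^{\alpha_0'}_0$ (Lemma \ref{regul}, Corollary \ref{corol1}); and this is exactly why the hypothesis $\alpha_0'>1/2$ is imposed, so that the singularity $|\lambda-\mu|^{\alpha-1}$ surviving in the kernel of the $P$-term remains square-integrable after one further division by $|\lambda-\mu|$. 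The extension of all estimates up to the embedded eigenvalues $\lambda_n$ themselves rests entirely on Lemma \ref{continu}, whose whole purpose is to show that the pole of $\big(1-A(z)\big)^{-1}$ at $z=\lambda_n$ is cancelled once $P$ is applied on the right.
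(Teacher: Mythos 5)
Your proof is correct and follows essentially the same route as the paper: you work with $C(z)=-A(z)\big(1-A(z)\big)^{-1}P$ on $\dot{C}^{\alpha}_0$ with $1/2<\alpha<\alpha_0'$, use Corollary \ref{corol1} and Lemma \ref{continu} for a bound uniform over $\{\mu+i0:\mu\in\Lambda\}$, and convert the resulting H\"older estimate on the kernel of the $P$-term into the bound ${\rm c}\,|\lambda-\mu|^{\alpha-1}$, whose square-integrability gives \eqref{avoir}. The only cosmetic difference is that you let the operators act on the $\B(\Hrond)$-valued function $v(\cdot,\mu)$ directly, while the paper applies $C(\mu+i0)$ to $v(\cdot,\mu)\zeta$ for unit $\zeta\in\Hrond$ and takes a supremum, which amounts to the same estimate.
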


\begin{proof}
Let us fix $\alpha$ with $1/2<\alpha <\alpha_0'$ and for $z \in \Pi_d$ we set $C(z):= -A(z) \big(1-A(z)\big)^{-1}P$ which belongs to $\K\big(\dot{C}^{\alpha}_0(\Lambda;\Hrond)\big)$, as a consequence of the previous results.
Now, the study of the l.h.s.~of \eqref{avoir} leads naturally to the analysis of the kernel of the operator $C(z)V$.
For that purpose, one easily observes that for $\zeta \in \Hrond$ and fixed $\mu\in \Lambda$, the map $\Lambda \ni \lambda \mapsto v(\lambda,\mu)\zeta \in \Hrond$ belongs to $C^{1}_0(\Lambda;\Hrond)$, or stated differently $v(\cdot,\mu)\zeta \in C^{1}_0(\Lambda;\Hrond)$. In particular, it implies that $v(\cdot,\mu)\zeta \in \dot{C}^{\alpha}_0(\Lambda;\Hrond)$. It then follows from the above observation on $C(z)$ that $C(z)v(\cdot,\mu)\zeta \in \dot{C}^{\alpha}_0(\Lambda;\Hrond)$ and
\begin{equation}\label{maj}
\|C(z)v(\cdot,\mu)\zeta\|_{\alpha}
\leq {\rm Const.}~\|v(\cdot,\mu)\zeta\|_{\alpha}
\leq {\rm Const.}~\|\zeta\|_\Hrond.
\end{equation}
Note that the constants can be chosen independently of $\mu\in \Lambda$ and of $z$ belonging to a compact subset of $\Pi_d$.

One then infers from this inequality and from the equalities
$$
[VR(z)PV](\lambda,\mu)\zeta =  [C(z)V](\lambda,\mu)\zeta = [C(z)v(\cdot,\mu)\zeta](\lambda)
$$
that
\begin{align*}
&|\lambda -\mu|^{-\alpha}
\Big\|
\big[V\;\!R(\mu + i0)P\;\!V\big](\lambda,\mu) -
\big[V\;\!R(\mu + i0)P\;\!V\big](\mu,\mu)
\Big\|_{\B(\Hrond)} \\
&= \sup_{\zeta \in \Hrond, \|\zeta\|_{\Hrond}=1}
|\lambda -\mu|^{-\alpha} \Big\|
\big[V\;\!R(\mu + i0)P\;\!V\big](\lambda,\mu)\zeta -
\big[V\;\!R(\mu + i0)P\;\!V\big](\mu,\mu)\zeta
\Big\|_{\Hrond} \\
&= \sup_{\zeta \in \Hrond, \|\zeta\|_{\Hrond}=1}
|\lambda -\mu|^{-\alpha}\Big\|\big[
[C(\mu+i0)v(\cdot,\mu)\zeta](\lambda)-[C(\mu+i0)v(\cdot,\mu)\zeta](\mu)
\big]\Big\|_\Hrond \\
&\leq  \sup_{\zeta \in \Hrond, \|\zeta\|_{\Hrond}=1} \big\|C(\mu+i0)v(\cdot,\mu)\zeta\big\|_{\alpha} \\
&\leq  {\rm Const.}
\end{align*}
with the constants independent of $\mu$ and $\lambda$.
Inserting this estimate into \eqref{avoir} leads directly to the result.
\end{proof}

Let us finally prove a result on the continuity of the map $\lambda\mapsto s(\lambda)$ under a similar assumption.

\begin{Proposition}
Suppose that Assumption \ref{stronger} holds for some $\alpha_0' \in (0,1]$. Then the map
\begin{equation*}
\Lambda \ni \lambda\mapsto s(\lambda) \in \B(\Hrond)
\end{equation*}
is norm-continuous.
\end{Proposition}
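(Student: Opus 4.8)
The plan is to show that the $\B(\Hrond)$-valued map $\lambda\mapsto t(\lambda,\lambda,\lambda+i0)$, which is already H\"older norm-continuous on $\Lambda\setminus\sigma_p(H)$ (Section~\ref{introFrie}), extends to a norm-continuous function on the whole of $\Lambda$; the claim on $s$ is then immediate from \eqref{slambda}. Since $\sigma_p(H)\cap\Lambda$ is finite, every point of $\Lambda\setminus\sigma_p(H)$ already has a neighbourhood in $\Lambda$ on which continuity holds, so it suffices to treat a neighbourhood in $\Lambda$ of an arbitrary embedded eigenvalue $\lambda_n$ (the case $\lambda_n\in\{a,b\}$ is allowed). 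Starting from $t(\lambda,\mu,z)=v(\lambda,\mu)-[VR(z)V](\lambda,\mu)$ and using that $v(\cdot,\cdot)$ is norm-continuous, I would insert $1=P+\sum_{k=1}^N|f_k\rangle\langle f_k|$ to the right of $R(z)$, as in the treatment of \eqref{terme2}; it then only remains to control, near $\lambda_n$, the diagonal values of $VR(\lambda+i0)PV$ and of each $VR(\lambda+i0)|f_k\rangle\langle f_k|V$.

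For the term containing $P$, I would use the identity $[VR(z)PV](\lambda,\mu)\zeta=[C(z)v(\cdot,\mu)\zeta](\lambda)$ from the proof of Proposition~\ref{vp}, now with $\alpha$ fixed in $(0,\alpha_0')$, which on the diagonal reads $[VR(\lambda+i0)PV](\lambda,\lambda)\zeta=[C(\lambda+i0)v(\cdot,\lambda)\zeta](\lambda)$. To prove norm-continuity of $\lambda\mapsto[VR(\lambda+i0)PV](\lambda,\lambda)$ at $\lambda_n$, I would fix $\zeta\in\Hrond$ with $\|\zeta\|_\Hrond=1$, put $g:=C(\lambda_n+i0)v(\cdot,\lambda_n)\zeta\in\dot{C}^{\alpha}_0(\Lambda;\Hrond)$, and bound $[C(\lambda+i0)v(\cdot,\lambda)\zeta](\lambda)-[C(\lambda_n+i0)v(\cdot,\lambda_n)\zeta](\lambda_n)$ by $\|g(\lambda)-g(\lambda_n)\|_\Hrond+\|C(\lambda+i0)v(\cdot,\lambda)\zeta-g\|_{\alpha}$. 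The first term is at most $\|g\|_\alpha\,|\lambda-\lambda_n|^\alpha$ and tends to $0$ uniformly in $\zeta$ since $\|g\|_\alpha\leq{\rm Const.}$ by \eqref{maj}. The second term I would split as $\|C(\lambda+i0)[v(\cdot,\lambda)-v(\cdot,\lambda_n)]\zeta\|_{\alpha}+\|[C(\lambda+i0)-C(\lambda_n+i0)]v(\cdot,\lambda_n)\zeta\|_{\alpha}$; the first summand vanishes as $\lambda\to\lambda_n$ because $\|C(\lambda+i0)\|_{\B(\dot{C}^{\alpha}_0)}$ stays bounded near $\lambda_n$ (Lemma~\ref{continu} together with the properties of $A(z)$) while $\|[v(\cdot,\lambda)-v(\cdot,\lambda_n)]\zeta\|_\alpha\to0$ uniformly in $\zeta$, and the second vanishes by the norm-continuity of $z\mapsto C(z)$ on $\Pi_d$. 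The uniform smallness of $\|[v(\cdot,\lambda)-v(\cdot,\lambda_n)]\zeta\|_\alpha$ is where Assumption~\ref{stronger} enters: the supremum part is $O(|\lambda-\lambda_n|)$ by the Lipschitz continuity of $v$ in its second argument, while for the H\"older-$\alpha$ seminorm one expresses $v(\cdot,\lambda)-v(\cdot,\lambda_n)$ as an integral of $v'(\cdot,\lambda)-v'(\cdot,\lambda_n)$ over the first variable and invokes the H\"older continuity of $v'$ in its second argument, which, since $\alpha<1$ and $\Lambda$ is bounded, yields an $O(|\lambda-\lambda_n|^{\alpha_0'})$ bound.

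It remains to handle the rank-one contributions. From $R(z)|f_k\rangle\langle f_k|=(\lambda_k-z)^{-1}\,|f_k\rangle\langle f_k|$ one finds that the kernel of $VR(z)|f_k\rangle\langle f_k|V$ at $z=\lambda+i0$ and on the diagonal is $(\lambda_k-\lambda-i0)^{-1}\,|[Vf_k](\lambda)\rangle_\Hrond\,{}_\Hrond\!\langle[Vf_k](\lambda)|$. For indices $k$ with $\lambda_k\neq\lambda_n$ this is manifestly norm-continuous near $\lambda_n$. For $k$ with $\lambda_k=\lambda_n$ I would invoke Lemma~\ref{regul}: its continuous representative satisfies $[Vf_k](\lambda)=-(\lambda-\lambda_n)f_k(\lambda)$ for every $\lambda\in\Lambda$, so the above diagonal kernel equals $-(\lambda-\lambda_n)\,|f_k(\lambda)\rangle_\Hrond\,{}_\Hrond\!\langle f_k(\lambda)|$ for $\lambda\neq\lambda_n$, which is a norm-continuous $\B(\Hrond)$-valued function near $\lambda_n$ vanishing at $\lambda_n$. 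Combining the contributions of $v(\lambda,\lambda)$, of $[VR(\lambda+i0)PV](\lambda,\lambda)$ and of the finitely many rank-one terms shows that $\lambda\mapsto t(\lambda,\lambda,\lambda+i0)$ is norm-continuous at $\lambda_n$, which finishes the proof.

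The main obstacle is the term $\|C(\lambda+i0)[v(\cdot,\lambda)-v(\cdot,\lambda_n)]\zeta\|_\alpha$: making it uniformly small in the unit vector $\zeta$ requires a bound on the H\"older-$\alpha$ seminorm of $v(\cdot,\lambda)-v(\cdot,\lambda_n)$, which is not provided by mere joint H\"older continuity of $v$ and is precisely why Assumption~\ref{stronger} is imposed. Note that, contrary to Proposition~\ref{vp}, no square-integrable singularity in $\lambda-\mu$ has to be absorbed here, so any $\alpha\in(0,\alpha_0')$ works, which explains why the present statement only requires $\alpha_0'\in(0,1]$.
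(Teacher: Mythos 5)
Your proof is correct and follows the paper's own route in all essentials: reduce via \eqref{slambda} to the norm-continuity of $\lambda\mapsto t(\lambda,\lambda,\lambda+i0)$, insert $1=P+\sum_k|f_k\rangle\langle f_k|$ to the right of $R(\lambda+i0)$, dispose of the rank-one pieces through the eigenfunction regularity of Lemma \ref{regul} (your identity $[Vf_k](\lambda)=-(\lambda-\lambda_k)f_k(\lambda)$ is just the pointwise form of the paper's $-R_0(\lambda_k\pm i0)Vf_k=f_k$, and both yield a kernel vanishing at the eigenvalue), and control the $P$-piece through $VR(z)PV=C(z)V$ with $C(z)=-A(z)\big(1-A(z)\big)^{-1}P$ acting on $\dot{C}^{\alpha}_0(\Lambda;\Hrond)$, using Lemma \ref{continu} and the uniform bound \eqref{maj}. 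The one place where you genuinely deviate is the treatment of the variation of the second kernel argument: the paper splits the diagonal difference into three terms and handles that variation by the adjoint identity $[C(\lambda+i0)V](\lambda,\mu)=\big\{[C(\lambda-i0)V](\mu,\lambda)\big\}^*$, which converts it into a first-argument variation and so never needs $\mu\mapsto v(\cdot,\mu)\zeta$ to be continuous in the $\|\cdot\|_\alpha$ norm; you instead prove exactly that continuity, estimating $\big\|[v(\cdot,\lambda)-v(\cdot,\lambda_n)]\zeta\big\|_\alpha=O\big(|\lambda-\lambda_n|^{\alpha_0'}\big)$ by integrating $v'(\cdot,\lambda)-v'(\cdot,\lambda_n)$ and using the H\"older continuity of $v'$ in its second variable. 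Both arguments are valid under Assumption \ref{stronger}; the paper's trick is slightly more economical in the regularity it invokes at that step, while yours is more direct and makes explicit where the derivative hypothesis enters, and your closing remark correctly identifies why $\alpha_0'\in(0,1]$ suffices here whereas Proposition \ref{vp} needs $\alpha_0'>1/2$.
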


\begin{proof}
It clearly follows from relation \eqref{slambda} and the properties stated below it that the statement would be proved if one shows that the map $\Lambda \setminus \sigma_p(H)\ni \lambda \mapsto s(\lambda)\in \B(\Hrond)$ can be continuously extended on all $\Lambda$. In particular, it is sufficient to show that the map $\lambda \mapsto t(\lambda, \lambda, \lambda +i0)\in \B(\Hrond)$ is norm-continuous on $\Lambda$.
For that purpose, let us recall that $t(\lambda,\lambda,z) = v(\lambda,\lambda) - [VR(z)V](\lambda,\lambda)$. Since the first term easily satisfies the necessary continuity property, we shall concentrate on the second term.

As for previous computations we consider the decomposition $1 = P + \sum_{n=1}^N |f_n\rangle \langle f_n|$, which is introduced on the right of the factor $R(z)$ in the expression $VR(z)V$. We first consider the operator $VR(z)|f_n\rangle \langle f_n|V$. One observes that
\begin{align*}
\big[VR(\lambda + i 0)|f_n\rangle \langle f_n|V\big](\lambda,\lambda) &=
\frac{1}{\lambda_n-\lambda -i0}\big[|Vf_n\rangle \langle Vf_n|\big](\lambda,\lambda)\\
&=  -|[Vf_n](\lambda)\rangle_\Hrond {}_\Hrond\!\langle [R_0(\lambda_n+i0)Vf_n](\lambda)| \\
&= |[Vf_n](\lambda)\rangle_\Hrond {}_\Hrond\!\langle f_n(\lambda)|.
\end{align*}
The continuity of the map $\Lambda \ni \lambda \mapsto \big[VR(\lambda + i 0)|f_n\rangle \langle f_n|V\big](\lambda,\lambda)\in \K(\Hrond)$ follows then from the property $f_n \in C^{\alpha_0'}_0(\Lambda;\Hrond)$, which is a consequence of Lemma \ref{regul}, and from the regularity of $v$.

Let us now consider the term $VR(\lambda+i0)PV = C(\lambda+i0)V$ with the operator $C(z)$ introduced and studied in the proof of Proposition \ref{vp}. In fact, part of the following arguments are based on results obtained in that proof. We fix $\alpha \in (0,\alpha_0')$ and first observe that for $\zeta \in \Hrond$ and $\lambda,\lambda'\in \Lambda$ one has
\begin{align}\label{aetudier}
\nonumber &[C(\lambda'+i0)V](\lambda',\lambda')\zeta - [C(\lambda+i0)V](\lambda,\lambda)\zeta \\
\nonumber &= \big[C(\lambda'+i0)V-C(\lambda+i0)V\big](\lambda',\lambda')\zeta
+ \Big\{[C(\lambda+i0)V](\lambda',\lambda')- [C(\lambda+i0)V](\lambda,\lambda')\Big\}\zeta \\
&\qquad + \Big\{[C(\lambda+i0)V](\lambda,\lambda')- [C(\lambda+i0)V](\lambda,\lambda)\Big\}\zeta.
\end{align}
We shall study separately each term and show that their norms vanish (independently of $\zeta$) as $\lambda' \to \lambda$. For the first one, we have
\begin{align*}
&\big\|\big[C(\lambda'+i0)V-C(\lambda+i0)V\big] (\lambda',\lambda')\zeta\big\|_\Hrond \\
& =\big\|
\big[\big\{C(\lambda'+i0)-C(\lambda+i0)\big\}v(\cdot,\lambda')\zeta\big](\lambda')
\big\|_\Hrond \\
&\leq \big\|
\big\{C(\lambda'+i0)-C(\lambda+i0)\big\}v(\cdot,\lambda')\zeta
\big\|_\alpha \\
&\leq \big\|C(\lambda'+i0)-C(\lambda+i0)\big\|_{\B(\dot{C}^{\alpha}_0(\Lambda;\Hrond))} \
\|v(\cdot,\lambda')\zeta\|_\alpha \\
& \leq {\rm Const.}~ \big\|C(\lambda'+i0)-C(\lambda+i0)\big\|_{\B(\dot{C}^{\alpha}_0(\Lambda;\Hrond))} \|\zeta\|_\Hrond
\end{align*}
with a constant independent of $\lambda$ and $\lambda'$. The continuity of the map $\Lambda \ni \lambda \mapsto C(\lambda+i0)\in \K(\Hrond)$ gives the necessary continuity.

For the second term of \eqref{aetudier}, one simply has to recall that $C(\lambda+i0)v(\cdot,\lambda)\zeta$ belongs to $\dot{C}^{\alpha}_0(\Lambda;\Hrond)$, and then
\begin{align*}
&\big\|\big\{[C(\lambda+i0)V](\lambda',\lambda')- [C(\lambda+i0)V](\lambda,\lambda')\big\}\zeta\big\|_\Hrond \\
&=\big\|\big[C(\lambda+i0)v(\cdot,\lambda')\zeta\big](\lambda') - \big[C(\lambda+i0)v(\cdot,\lambda')\zeta\big](\lambda)
\big\|_\Hrond\\
&\leq |\lambda'-\lambda|^\alpha
\big\| C(\lambda+i0)v(\cdot,\lambda')\zeta \big\|_\alpha\\
& \leq {\rm Const.}~|\lambda'-\lambda|^\alpha\;\|\zeta\|_\Hrond,
\end{align*}
with the last inequality based on \eqref{maj}. Again, the necessary continuity follows from these inequalities.
Finally, the last term can be treated similarly by taking the relation $[C(\lambda+i0)V](\lambda,\mu) = \big\{[C(\lambda-i0)V](\mu,\lambda)\big\}^*$ into account and by expressing the norm with a scalar product.
\end{proof}


\end{document}